\title{Simple Games versus Weighted Voting Games\thanks{This paper received support from the Leverhulme Trust (RPG-2016-258).}}
\author{Frits Hof\inst{1} \and Walter Kern\inst{1} \and Sascha Kurz\inst{2} \and Dani\"el~Paulusma\inst{3}}
\institute{University of Twente, The Netherlands, \texttt{f.hof@home.nl,w.kern@math.utwente.nl} 
\and  University of Bayreuth, Germany \texttt{sascha.kurz@uni-bayreuth.de}
\and Durham University, United Kingdom \texttt{daniel.paulusma@durham.ac.uk}}
\def\R{{\mathbb R}}
\newcommand{\NP}{{\sf NP}}
\def\l{\lambda}
\begin{document}
\maketitle

\begin{abstract}
\noindent
A simple game $(N,v)$ is given by a set $N$ of $n$ players and a partition of~$2^N$ into  a set~$\mathcal{L}$ of losing coalitions~$L$ with value $v(L)=0$ that is closed under taking subsets and a set $\mathcal{W}$ of winning coalitions $W$ with $v(W)=1$.  Simple games with $\alpha= \min_{p\geq 0}\max_{W\in {\cal W},L\in {\cal L}} \frac{p(L)}{p(W)}<1$ are known as weighted voting games. Freixas and Kurz (IJGT, 2014) conjectured that $\alpha\leq \frac{1}{4}n$ for every simple game $(N,v)$. We confirm this conjecture for two complementary cases, namely when all minimal winning coalitions have size~$3$ and when no minimal winning coalition has size~$3$. As a general bound we prove that $\alpha\leq \frac{2}{7}n$ for every simple game $(N,v)$. For complete simple games, Freixas and Kurz conjectured that $\alpha=O(\sqrt{n})$. We prove this conjecture up to a $\ln n$ factor. We also prove that for graphic simple games, that is, simple games in which every minimal winning coalition has size~2, computing $\alpha$ is \NP-hard, but polynomial-time solvable if the underlying graph is bipartite. Moreover, we show that for every graphic simple game, deciding if $\alpha<a$ is polynomial-time solvable for every fixed $a>0$.
\end{abstract}

\section{Introduction}\label{sec_introduction}

Cooperative Game Theory provides a mathematical framework for capturing situations where subsets of agents  may form a coalition in order 
to obtain some collective profit or share some collective cost. Formally, 
a \emph{cooperative game  (with transferable utilities)} consists of a pair~$(N,v)$, where $N$ is a set of $n$ agents called \emph{players} and
$v: 2^N\to \R_+$ is a \emph{value function} that satisfies $v(\emptyset) = 0$.
In our context, the value $v(S)$ of a \emph{coalition} $S\subseteq N$ represents the profit for $S$ if all players in $S$ choose to collaborate with (only) each other. 
The central problem in cooperative game theory is to allocate the total profit $v(N)$ of the \emph{grand coalition}~$N$ to the 
individual players $i \in N$ in a {\lq\lq}fair{\rq\rq} way. To this end various \emph{solution concepts} such as the core, Shapley value or nuclueolus have been designed; see Chapter~9 of~\cite{peters2008games} for an overview. 

In our paper we study simple games~\cite{Sh62,NM44}.
 Simple games form a classical class of games, which are well studied;  see also the book of Taylor and Zwicker~\cite{taylor1999simple}.\footnote{Sometimes simple games are defined without requiring monotonicity (see, for example,~\cite{peters2008games}).}
The notion of being simple means that every coalition either has some equal amount of power or no power at all.  Formally, a cooperative game $(N,v)$ is \emph{simple} if  $v$ is a monotone 0--1 function with $v(\emptyset)=0$ and $v(N)=1$, so
$v(S)\in \{0,1\}$ for all $S\subseteq N$ and $v(S)\leq v(T)$ whenever $S\subseteq T$. 
 In other words, if $v$ is simple, then
there is a set $\mathcal{W} \subseteq 2^N$ of \emph{winning coalitions} $W$ that have value $v(W)=1$ and
a set $\mathcal{L} \subseteq 2^N$of \emph{losing coalitions} $L$ that have value $v(L)=0$. 
Note that $N\in {\cal W}$, $\emptyset \in {\cal L}$ and ${\cal W}\cup {\cal L}=2^N$. The monotonicity of $v$ implies that
subsets of losing coalitions
are losing and supersets of winning coalitions are winning. A winning coalition~$W$ is {\it minimal} if every proper subset 
of $W$ is losing, and a losing coalition~$L$ is {\it maximal} if every proper superset of $L$ is winning.

A simple game is a \emph{weighted voting game} if there exists a payoff vector $p \in \R_+^n$  
such that a coalition~$S$ is winning if $p(S)\geq 1$ and losing if $p(S)<1$. 
Weighted voting games are also known as {\it weighted majority games} and form one of the most popular classes of simple games.
Due to their practical applications in voting systems, computer operating systems and model resource allocation (see e.g.~\cite{BFJL02,CEW11}), structural and computational 
complexity aspects for solution concepts for weighted voting games have
been thoroughly investigated~\cite{ECJ08,EGGW09,freixas2011complexity,gvozdeva2013three}.
However, it is easy to construct simple games that are not weighted voting games. We give an example below, but in fact there are many important simple games that are not weighted voting games, and the relationship between weighted voting games and simple games is 
not yet fully understood. Therefore, Gvozdeva, Hemaspaandra, and Slinko~\cite{gvozdeva2013three} introduced a parameter $\alpha$, called the 
{\it critical threshold value}, to measure the {\lq\lq}distance{\rq\rq} of a 
simple game to the class of weighted voting games:
\begin{equation} \label{eq_minmax}
\alpha ~= ~\alpha(N,v)~=\min_{p \ge 0}~ \max_{W,L}~ \frac{p(L)}{p(W)}, 
\end{equation}
\noindent
where the maximum is taken over all winning coalitions in ${\cal W}$ and all losing coalitions in~${\cal L}$. 
A simple game $(N,v)$ is a weighted voting game if and only if $\alpha<1$.\footnote{If $\alpha\leq 1$, we speak of roughly weighted voting games~\cite{taylor1999simple}.} 
This follows from observing that each optimal solution $p$ of 
(\ref{eq_minmax}) can be scaled to satisfy $p(W)\ge 1$ for all winning coalitions $W$.

A concrete example of a simple game $(N,v)$ that is not a weighted voting game and that has in fact a large value of $\alpha$ was given 
in~\cite{paper_alpha_roughly_weighted}.
Let $N=\{1, \dots, n\}$ for some even integer~$n\geq 4$, and let the minimal winning coalitions be the pairs $\{1,2\}, \{2,3\}, \dots \{n-1, n\}, \{n,1\}$.
Consider any  payoff $p\ge 0$ satisfying $p(W) \ge 1$ for every winning coalition~$W$. Then $p_i+p_{i+1}\geq 1$ for $i=1,\ldots,n$ 
(where $n+1=1$). 
This means that $p(N)\geq \frac{1}{2}n$. 
Then, for at least one of $L=\{2,4,6,\dots, n\}$ and $L=\{1,3,5, \dots, n-1\}$, we have  
$p(L)\geq \frac{1}{4}n$, showing that $\alpha \ge \frac{1}{4}n$. On the other hand, it is easily seen that $p \equiv \frac{1}{2}$ satisfies $p(W) \ge 1$
for all  winning coalitions and $p(L) \le \frac{1}{4}n$ for all losing coalitions, showing that $\alpha \le \frac{1}{4}n$. Thus 
we conclude that
 $\alpha=\frac{1}{4}n$.
Due to this somewhat extreme example, the authors of~\cite{paper_alpha_roughly_weighted} conjectured that $\alpha\leq \frac{1}{4}n$ for all simple games. 
This conjecture turns out to be an  interesting combinatorial problem.

\smallskip
\noindent
{\it Conjecture~1~\cite{paper_alpha_roughly_weighted}.} For every simple game $(N,v)$, it holds that $\alpha \le \frac{1}{4}n$.

\subsection{Our Results}

In Section~\ref{sec_3ornot3} we prove that Conjecture~1 holds for the case where all
minimal winning coalitions have size~$3$ and for its complementary case where no minimal winning collection has size~$3$.
We were not able to prove Conjecture~1 for all simple games.
However, in Section~\ref{sec_general} we show that $\alpha \le \frac{2}{7}n \approx 0.2858 n$ for every simple game. 

In  Section~\ref{sec_complete} we consider a subclass of simple games based on a natural desirability order~\cite{Is56}. A simple game $(N,v)$ is \emph{complete} if the players can be ordered by a complete, transitive ordering $\succeq$, say,
$1 \succeq 2 \succeq \dots \succeq n$, indicating that higher ranked players have more power (and are more desirable) than lower ranked players.
More precisely, $i \succeq j$ means that $v(S\cup \{i\}) \ge v(S \cup \{j\})$ for any coalition $S \subseteq N\backslash \{i,j\}$.
The class of complete simple games properly contains all weighted voting games~\cite{FP08}.
For complete simple games, we show a lower bound on $\alpha$
that is asymptotically lower than $\frac{1}{4}n$, 
namely $\alpha=O(\sqrt{n}\ln n)$.
This bound matches,  up to a $\ln n$ factor, the lower bound of $\Omega(\sqrt{n})$ in~\cite{paper_alpha_roughly_weighted} 
(conjectured to be tight in~\cite{paper_alpha_roughly_weighted}).

In Section \ref{sec_algo} we discuss some algorithmic and complexity issues. We focus on instances where all minimal winning coalitions have size~$2$.  We say that such simple games are {\it graphic}, as they can conveniently be described by a graph $G=(N,E)$
with vertex set $N$ and edge set $E~=~\{ij~|~\{i,j\}~ \text{is winning} \}$.
For graphic simple games we show that computing~$\alpha$ 
is \NP-hard in general (see below for some related results).
 On the positive side, we show that computing $\alpha$ is 
 polynomial-time solvable if the underlying graph $G=(N,E)$ is bipartite, or if $\alpha$ is known to be small  (less than a fixed number $a$). 
We conclude with some remarks and open problems in  Section~\ref{sec_conclusion}.

\subsection{Related Work}

Another way to measure the distance of a simple game to the class of weighted voting games is to use the {\it dimension} of a simple game~\cite{TZ93}, which is the smallest number 
of weighted voting games whose intersection equals a given simple game. 
However, computing the dimension of a simple game is \NP-hard~\cite{DW06}, and
the largest dimension of a simple game with $n$ players is $2^{n-o(n)}$~\cite{high_dimension}. 
Moreover,
 simple games with dimension~1 have $\alpha=1$, but
 $\alpha$ may be arbitrarily large for simple games with dimension larger than~1.\footnote{A simple game with $\frac{1}{2}n$ players of type A and $\frac{1}{2}n$ players of type B and minimal winning coalitions consisting of one player of each type has dimension~2 and $\alpha=\frac{1}{4}n$.}
 Hence there is no direct relation between the two distance measures.
We also note that 
Gvozdeva, Hemaspaandra, and Slinko~\cite{gvozdeva2013three} introduced two other distance parameters. One measures the power balance between small and large coalitions. The other one allows multiple thresholds instead of threshold~1 only. 
See \cite{gvozdeva2013three} for further details.

For graphic simple games, it is natural to take the number of players $n$ as the input size for answering complexity questions, but in general simple games may have different representations. For instance, one can list all minimal winning coalitions or all maximal losing coalitions. Under these two representations the problem of deciding if 
 $\alpha < 1$, that is, if a given simple game is a weighted voting game, is also polynomial-time solvable.
This follows from results of Heged\"us and Megiddo~\cite{hegedus1996geometric} and  Peled and Simeone~\cite{peled1985polynomial}, as shown by Freixas, Molinero, Olsen and Serna~\cite{freixas2011complexity}.
The latter authors also showed that the same result holds if the representation is given by listing all winning coalitions or all losing coalitions. Moreover, they gave a number of complexity results of recognizing other subclasses of simple games.

We also note a similarity of our research with research into matching games. 
In Section~\ref{sec_3ornot3} we show that a crucial case in our 
study is when the simple game is graphic, that is, defined on some graph $G=(N,E)$.
In the corresponding \emph{matching game} a coalition $S 
\subseteq N$ has value $v(S)$ equal to the maximum size of a matching in the subgraph of $G$ induced by $S$. 
One of the most prominent solution concepts is the \emph{core} of a game, defined by $core(N,v) := \{p \in \R^n ~|~ p(N)=v(N), ~p(S) \ge v(S)~ \forall S  \subseteq N\}$. 
A core allocation is stable, as no coalition has any incentive to object against it. However, the core may be empty.
Matching games are not simple games. Yet their core constraints are readily seen to simplify to $p\ge 0$ and $p_i+p_j \ge 1$ for all $ij \in E$.
Classical solution concepts, such as the core and core-related ones like least core,  nucleolus or nucleon are
well studied for matching games, see, for example,~\cite{biro2012matching,bock2015stable,faigle1998,kern2003matching,KKT,SR94}. However, the problems encountered there differ with respect to the objective function. 
For graphic simple games  we aim to bound $p(L)$  over all losing coalitions, subject to  $p \ge 0$,~$p_i+p_j \ge 1$ for all $ij \in E$, whereas for matching games with an empty core we wish to bound $p(N)$, subject to $p \ge 0$,~$p_i+p_j \ge 1$ for all $ij \in E$. 
Nevertheless, basic tools from matching theory like the Gallai-Edmonds decomposition play a role in both cases.

\section{Two Complementary Cases}\label{sec_3ornot3}

In this section we will consider the following two {\lq\lq}complementary{\rq\rq} cases: when all winning coalitions have size equal to~$3$ (Section~\ref{s-1}), and when no
winning coalition has size equal to~$3$ (Section~\ref{s-2}). First observe that winning coalitions of size~$1$ do not cause any problems. If $\{i\}$
is a winning coalition of size~$1$, we satisfy it by setting $p_i=1$. Since no losing coalition $L$ contains $i$, we may 
remove~$i$ from the game and solve (\ref{eq_minmax}) with respect to the resulting subgame. A similar argument applies if some 
$i \in N$ is not contained in any minimal winning coalition. We then simply define $p_i=0$ and remove $i$ from the game. 
Thus, we may assume without loss of generality that all minimal winning coalitions have size at least~$2$ and that they cover all of $N$.

\subsection{All Minimal Winning Coalitions Have Size $2$.}\label{s-1}
We first investigate the case where all minimal winning coalitions have size exactly~$2$.
This case (which is a crucial case in our study) can conveniently be translated to a graph-theoretic problem. Let $G=(N,E)$ be
the graph with vertex set $N$ whose edges are exactly the minimal winning coalitions of size~$2$ in our game $(N,v)$. Our assumption that 
$N$ is completely covered by minimal winning coalitions  means that $G$ has no isolated vertices. 
Losing coalitions correspond to independent sets of vertices $L\subseteq N$. Then the min max problem (\ref{eq_minmax}) becomes

\begin{equation}
 \alpha~:=~\alpha_G ~:=~ \min_p ~ \max_L ~ p(L),
\end{equation}
\noindent
where the minimum is taken over all {\it feasible} pay-off vectors $p$, that is, $p\in \R_+^n$ with $p_i+p_j \ge 1$ for every $ij\in E$, and the maximum is taken over all independent sets $L\subseteq N$.

We first consider the case where $G=(A \cup B, E)$ is bipartite. To explain the basic idea, we introduce the following concept
(illustrated in Figure \ref{fig:wellspread}).

\begin{figure}
\center
\begin{tikzpicture}[scale=.5]
\node[scale=1] (A) at (-5,0){A};
\node[scale=1] (S) at (0,0){S};
\node[scale=1] (B) at (-7,-4){B};
\node[scale=0.9] ({N(S)}) at (0,-4){N(S)};
  \draw[line width=0.5mm] (0,0) ellipse (8cm and 0.7cm);
  \draw[line width=0.5mm] (0,0) ellipse (3cm and 0.35cm);
  \draw[line width=0.5mm] (0,-4) ellipse (10cm and 0.8cm);
  \draw[line width=0.5mm] (0,-4) ellipse (5cm and 0.42cm);
\draw[line width=1.2] (-3,0)--(-5,-4); 
\draw[line width=1.2] (-8,0)--(-10,-4);
\draw[line width=1.2] (3,0)--(5,-4); 
\draw[line width=1.2] (8,0)--(10,-4);
\end{tikzpicture}
\caption{A well-spread bipartite graph.} \label{fig:wellspread}
\end{figure}
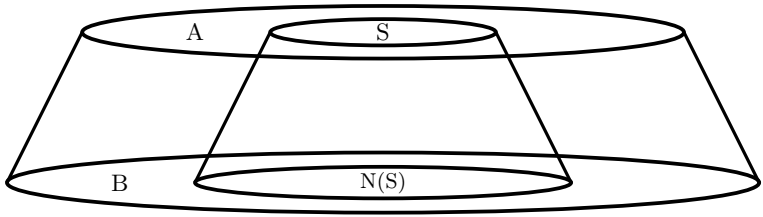

\noindent
{\bf Definition.}
Let $G=(A \cup B,E)$ be a bipartite graph of order $n = |A|+|B|$ without isolated notes and assume without loss of generality that $|A| \le |B|$. 
Let $\lambda \le \frac{1}{2}$ such that $|A|=\lambda n$ (and $|B|=(1-\lambda)n$).
 We say that $G$ is \emph{well-spread} with parameter $\lambda$ if for all
$S\subseteq A$ we have $$\frac{|S|}{|N(S)|} ~\le ~\frac{|A|}{|B|} ~=~ \frac{\lambda}{1-\lambda}.$$
(Here, as usual, $N(S) \subseteq B$ denotes the set of neighbors of $S$ in $B$.)

\smallskip
\noindent
Examples of well-spread bipartite graphs are biregular graphs or biregular graphs minus an edge. Note that if $G$ is 
well-spread with parameter $\lambda \le \frac{1}{2}$, then Hall's condition $|N(S)| \ge |S|$ for 
all  $S \subseteq A$ is satisfied, implying that $A$ can be completely matched to $B$ (see, for example,~\cite{lovasz2009matching}).
The following lemma is the key observation.

\begin{lemma}\label{lemma_well}
 Let $G=(A\cup B,E)$ be well-spread with parameter $\lambda\le \frac{1}{2}$. Then $p\equiv \lambda$ on $B$ and $p \equiv 1-\lambda$  on $A$ yields
 $\alpha_G \le \frac{1}{4}n$.
 \end{lemma}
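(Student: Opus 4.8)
The plan is to show that the specific payoff $p$ in the statement is feasible and then to bound $p(L)$ over \emph{all} independent sets $L$ by $\lambda(1-\lambda)n \le \tfrac14 n$. Feasibility is immediate: since $G$ is bipartite between $A$ and $B$, every edge $ij\in E$ has one endpoint $i\in A$ and one endpoint $j\in B$, so $p_i+p_j=(1-\lambda)+\lambda=1$; also $p\ge 0$ because $0\le\lambda\le\tfrac12$. Hence $p$ is a feasible pay-off vector, and it suffices to prove $p(L)\le\tfrac14 n$ for every independent set $L$.

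For the main estimate, write an arbitrary independent set as $L=S\cup T$ with $S:=L\cap A$ and $T:=L\cap B$. Independence of $L$ means there is no edge between $S$ and $T$, i.e.\ $T\cap N(S)=\emptyset$, so $|T|\le|B|-|N(S)|$. Therefore
$$p(L)=(1-\lambda)|S|+\lambda|T|\le (1-\lambda)|S|+\lambda\bigl(|B|-|N(S)|\bigr).$$
Assuming $S\neq\emptyset$ (so that $N(S)\neq\emptyset$, as $G$ has no isolated vertices), the well-spread condition $\tfrac{|S|}{|N(S)|}\le\tfrac{\lambda}{1-\lambda}$ rearranges to $\lambda|N(S)|\ge(1-\lambda)|S|$, and substituting this makes the $|S|$-terms cancel, leaving $p(L)\le\lambda|B|=\lambda(1-\lambda)n$. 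The degenerate case $S=\emptyset$ gives $p(L)=\lambda|T|\le\lambda|B|$ directly, so $p(L)\le\lambda(1-\lambda)n$ holds in all cases.

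To finish, note that $\lambda\mapsto\lambda(1-\lambda)$ is increasing on $[0,\tfrac12]$ and attains its maximum $\tfrac14$ at $\lambda=\tfrac12$; hence $p(L)\le\tfrac14 n$ for every independent set $L$, which yields $\alpha_G\le\tfrac14 n$. I do not expect a genuine obstacle here: the argument is essentially a one-line application of the well-spread inequality after the obvious bound $|T|\le|B|-|N(S)|$. The only points requiring a moment's care are the degenerate case $S=\emptyset$ (where the defining ratio is the undefined $0/0$) and the observation that the bound must in particular cover $L=A$ itself — both are subsumed by the estimate $p(L)\le\lambda|B|$.
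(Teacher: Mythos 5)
Your proof is correct and takes essentially the same route as the paper's: apply the well-spread condition to $S=L\cap A$ to bound $|N(S)|$ from below, hence $|L\cap B|\le |B|-|N(S)|$ from above, and conclude $p(L)\le\lambda(1-\lambda)n\le\frac{1}{4}n$. The only cosmetic difference is that the paper parametrizes $|L\cap A|=\rho\lambda n$ whereas you work with the cardinalities directly (and you additionally record the routine feasibility check and the $S=\emptyset$ case).
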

 
\begin{proof}
Assume $L\subseteq N$ is an independent set. Let $\rho \le 1$ such that $|L \cap A| =\rho\lambda n$. Since $G$ is well-spread, we get
$|N(L\cap A)| \ge \rho (1- \lambda)n$, so that $|L \cap B| \le (1-\rho)(1-\lambda)n$. Thus
\[\begin{array}{lcl}
 p(L) &= &|L\cap A|(1-\lambda)+|L\cap B|\lambda\\[3pt]
         &\le &\rho\lambda n (1-\lambda) + (1-\rho)(1-\lambda) n \lambda\\[3pt]
         & \le &\rho \frac{1}{4}n+(1-\rho)\frac{1}{4}n\\[3pt]
         &\le &\frac{1}{4}n.
         \end{array}\]
Hence we have proven the lemma.   \qed
\end{proof}

In general, when $G=(A \cup B,E)$ is not well-spread, we seek to decompose $G$ into well-spread induced subgraphs $G_i=(A_i \cup B_i, E_i)$
with $A= \bigcup A_i$ and $B=\bigcup B_i$. Of course, this can only work if $G=(A \cup B,E)$ is such that $A$ can be matched
to $B$ in~$G$. 

\begin{proposition}\label{prop_bipartite}
Let $G=(A \cup B,E)$ be a bipartite graph without isolated vertices and assume that $A$ can be matched into $B$. Then $G$ decomposes
into well-spread induced subgraphs $G_i=(A_i \cup B_i, E_i)$, with $A= \bigcup A_i$ and $B=\bigcup B_i$
in such a way that for all $i,j$ with $i<j$,  $\lambda_i \ge \lambda_j$ and no edges join $A_i$ to $B_j$. \qed
\end{proposition}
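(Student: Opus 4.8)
The plan is to build the decomposition greedily, splitting off at each step the \emph{densest} part of $A$, and to argue by induction on $|A|$ (the case $A=\emptyset$ being trivial, since then $B=\emptyset$ as well). For the inductive step, let $\mu:=\max\{\,|S|/|N(S)| : \emptyset\ne S\subseteq A\,\}$; since $A$ can be matched into $B$, Hall's condition gives $|N(S)|\ge|S|$ for all $S$, so $\mu\le 1$. Because $S\mapsto|N(S)|$ is submodular on subsets of $A$, the sets attaining this maximum are closed under union, so there is a largest one, $A_1$. Put $B_1:=N(A_1)$ and let $\lambda_1\in(0,\tfrac{1}{2}]$ be given by $\lambda_1/(1-\lambda_1)=|A_1|/|B_1|=\mu$. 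For $S\subseteq A_1$ we have $N(S)\subseteq N(A_1)=B_1$, so $N_{G_1}(S)=N(S)$ and $|S|/|N(S)|\le\mu=\lambda_1/(1-\lambda_1)$; hence $G_1:=G[A_1\cup B_1]$ is well-spread with parameter $\lambda_1$. I would then set $G':=G-(A_1\cup B_1)$, with parts $A':=A\setminus A_1$ and $B':=B\setminus B_1$, apply the induction hypothesis to $G'$, and place $G_1$ in front of the resulting pieces.

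The crux — and the step I expect to be the main obstacle — is to show that removing the densest set does not destroy the hypotheses of the proposition for $G'$: that $G'$ has no isolated vertices and that $A'$ still matches into $B'$. Everything will follow from the estimate that for every nonempty $S\subseteq A'$ one has $N_{G'}(S)\ne\emptyset$ and $|S|/|N_{G'}(S)|<\mu$. For the first part: if $N(S)\subseteq B_1$, then $N(A_1\cup S)=B_1$ and the ratio of $A_1\cup S$ equals $(|A_1|+|S|)/|B_1|>\mu$, contradicting maximality of $\mu$. For the second part: $N(A_1\cup S)=B_1\cup N(S)$ has size $|B_1|+|N_{G'}(S)|$, so the ratio of $T:=A_1\cup S$ is the mediant of $|A_1|/|B_1|=\mu$ and $|S|/|N_{G'}(S)|$; if the latter were $\ge\mu$, this mediant would be $\ge\mu$, so maximality of $\mu$ would make it exactly $\mu$, forcing $|S|/|N_{G'}(S)|=\mu$ and hence $A_1\cup S\supsetneq A_1$ to be another maximizer — contradicting maximality of $A_1$. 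It is precisely here that one needs both that $\mu$ is the maximum density and that $A_1$ is the \emph{largest} set achieving it.

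With this estimate in hand the rest is bookkeeping. Taking $S$ a single vertex shows no vertex of $A'$ is isolated in $G'$; a vertex of $B'$ has, in $G$, no neighbour in $A_1$ (as $B'\cap N(A_1)=\emptyset$) but at least one neighbour overall, hence a neighbour in $A'$; so $G'$ has no isolated vertices. Since $\mu\le 1$, the estimate yields $|N_{G'}(S)|>|S|$ for all nonempty $S\subseteq A'$, so Hall's condition holds in $G'$ and $A'$ matches into $B'$. By induction $G'$ decomposes into well-spread pieces $G_2,\dots,G_k$ with $\lambda_2\ge\dots\ge\lambda_k$ and no $G'$-edges from $A_i$ to $B_j$ for $2\le i<j$ (equivalently, no $G$-edges, since $G'$ is induced). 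Adding $G_1$ gives partitions $A=A_1\cup\dots\cup A_k$ and $B=B_1\cup\dots\cup B_k$; no edge joins $A_1$ to any $B_j$ with $j>1$ because $N(A_1)=B_1$ is disjoint from $B_j$; and $\lambda_1\ge\lambda_2$ because the no-edge property and absence of isolated vertices give $N_{G'}(A_2)=B_2$, so $\lambda_2/(1-\lambda_2)=|A_2|/|B_2|<\mu=\lambda_1/(1-\lambda_1)$ by the estimate, while $x\mapsto x/(1-x)$ is increasing. This completes the induction.
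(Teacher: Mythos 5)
Your proof is correct and follows essentially the same greedy/inductive strategy as the paper: peel off a maximizer of $|S|/|N(S)|$ together with its neighbourhood, and use maximality of that ratio both to verify Hall's condition (and non-isolation) for the remainder and to order the $\lambda_i$. The only refinement is that you take the \emph{largest} maximizer (via submodularity of $S\mapsto|N(S)|$), which gives you strict inequalities where the paper is content with non-strict ones.
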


\begin{proof}
Let $S \subseteq A$ maximize $|S|/|N(S)$. Set $A_1:=S$ and $B_1:=N(S)$. 
Let $G'$ be the subgraph of $G$ induced by $A\backslash A_1$
and $B':= B \backslash B_1$. Then $G'$ satisfies the assumption of the Proposition. Indeed, if $A'$ cannot be matched into $B'$ in $G'$, then
there must be some $S'\subseteq A'$ with $|S'| > |N'(S')|$, where $N'(S')=N(S')\backslash B_1$ is the neighborhood of $S'$ in $G'$. But then 
$|S \cup S'| =|S|+|S'|$ and $|N(S\cup S')|\le |N(S)|+|N'(S)|$ shows that $S$ cannot maximize $|S|/|N(S)|$, a contradiction. Thus, by 
induction, we may assume that $G'$ decomposes in the desired way into well-spread subgraphs $G_2, \dots, G_k$ with parameters
$\l_2 \ge \dots \ge \l_k$. The claim then follows by observing that 
$(i)$ no edges join $B_1$ to $A'$; and
$(ii)$ $\l_1 \ge \l_2$ (otherwise $ S \cup A_2$ would contradict the choice of $S$ maximizing $|S|/|N(S)|$). \qed
\end{proof}

We now combining the last two results.

\begin{corollary}\label{cor_bipartite}
 For every bipartite graph $G=(A\cup B,E)$ of order $n$ satisfying the assumption of Proposition \ref{prop_bipartite}, there exists a payoff 
 vector $p\ge 0$ 
 such that $p_i+p_j \ge 1$ for $ij\in E$ and $p(L)\le \frac{1}{4}n$ for any independent set $L \subseteq A \cup B$. In addition,
 $p$ can be chosen so as to satisfy $p \ge \frac{1}{2}$ on $A$.
\end{corollary}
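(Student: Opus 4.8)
The plan is to combine Proposition~\ref{prop_bipartite} with Lemma~\ref{lemma_well}. First I would apply Proposition~\ref{prop_bipartite} to decompose $G$ into well-spread induced subgraphs $G_1,\dots,G_k$, where $G_i=(A_i\cup B_i,E_i)$, $A=\bigcup_i A_i$, $B=\bigcup_i B_i$ (a partition of $A$ and of $B$), the parameters satisfy $\frac{1}{2}\ge\l_1\ge\l_2\ge\cdots\ge\l_k$, and no edge joins $A_i$ to $B_j$ whenever $i<j$. On each block I would take the payoff supplied by Lemma~\ref{lemma_well}: set $p\equiv 1-\l_i$ on $A_i$ and $p\equiv\l_i$ on $B_i$. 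Since every $\l_i\le\frac{1}{2}$, this already gives $p\ge\frac{1}{2}$ on $A$, which is the ``in addition'' clause.

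Next I would verify feasibility. As $G$ is bipartite with parts $A$ and $B$, every edge has the form $ij$ with $i\in A_r$ and $j\in B_s$ for some $r,s$. Since no edge joins $A_r$ to $B_s$ when $r<s$, we must have $r\ge s$, hence $\l_r\le\l_s$ by the ordering of the parameters, and therefore $p_i+p_j=(1-\l_r)+\l_s\ge 1$, as required.

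Finally I would bound $p(L)$ for an arbitrary independent set $L\subseteq A\cup B$. Put $L_i:=L\cap(A_i\cup B_i)$ and $n_i:=|A_i|+|B_i|$, so that $p(L)=\sum_i p(L_i)$ and $\sum_i n_i=n$. The key point is that the estimate in the proof of Lemma~\ref{lemma_well} localises to each block: $L_i$ is independent in the induced subgraph $G_i$, and since $L_i\cap A_i\subseteq L$ its neighbourhood $N_{G_i}(L_i\cap A_i)\subseteq B_i$ is disjoint from $L$, so $|L_i\cap B_i|\le|B_i|-|N_{G_i}(L_i\cap A_i)|$. Feeding the well-spread inequality for $G_i$ into exactly the computation of Lemma~\ref{lemma_well} yields $p(L_i)\le\l_i(1-\l_i)\,n_i\le\frac{1}{4}n_i$. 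Summing over $i$ gives $p(L)\le\frac{1}{4}n$.

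The one step that needs a little care is feasibility, where both structural outputs of Proposition~\ref{prop_bipartite}---the monotonicity of the $\l_i$ and the absence of $A_i$--$B_j$ edges for $i<j$---must be used together. The remaining ingredient is the observation that passing to an induced block leaves the relevant neighbourhoods inside that block, so that Lemma~\ref{lemma_well} applies blockwise; everything else is routine bookkeeping.
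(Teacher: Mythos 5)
Your proof is correct and follows essentially the same route as the paper, which simply states that the corollary follows from Lemma~\ref{lemma_well} and Proposition~\ref{prop_bipartite} with the blockwise payoff $p\equiv 1-\l_i$ on $A_i$ and $p\equiv\l_i$ on $B_i$. You merely make explicit the details the paper leaves implicit (feasibility of cross-block edges via the ordering $\l_i\ge\l_j$ and the absence of $A_i$--$B_j$ edges for $i<j$, and the blockwise application of the lemma's estimate), all of which check out.
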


\begin{proof}
The result follow immediately from Lemma~\ref{lemma_well} and Proposition~\ref{prop_bipartite}.
Note that if $p$ is chosen as $p\equiv 1-\l_i$ on $A_i$, then $p \ge \frac{1}{2}$ indeed. \qed
\end{proof}

As we will see, the assumption of Proposition \ref{prop_bipartite} is not really restrictive for our purposes. 
A (connected) component $C$ of a graph~$G$ is 
 {\it even} ({\it odd}) if  $C$ has an even (odd) number of vertices.
A graph~$G=(N,E)$ is \emph{factor-critical} if for every vertex $v\in V(G)$, the graph $G-v$ has a perfect matching.
We  recall the well-known Gallai--Edmonds Theorem 
(see~\cite{lovasz2009matching})
for characterizing the structure of maximum matchings in $G$; see also Figure~\ref{fig:decomp2}.
There exists a (unique) subset $A\subseteq N$, called a \emph{Tutte set}, such that
\begin{itemize}
 \item every even component of $G-A$ has a perfect matching;
 \item every odd component of $G-A$ is factor-critical; 
 \item every maximum matching in $G$ is the union of a perfect matching in each even component, a nearly perfect matching in each odd component and a matching that matches $A$ (completely) to the odd components.
\end{itemize}

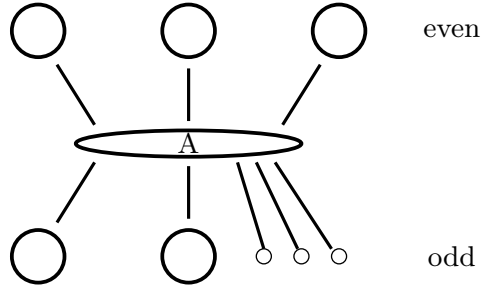
\begin{figure}
\center
\begin{tikzpicture}[scale=.5]

\node[scale=1.2] (A) at (0,0){A};
\node[scale=1.2] (E) at (7,3){\text{even}};
\node[scale=1.2] (O) at (7,-3){\text{odd}};
\draw[line width=.5]  (2,-3) circle [radius=0.2];
\draw[line width=.5]  (3,-3) circle [radius=0.2];
\draw[line width=.5]  (4,-3) circle [radius=0.2];
  \draw[line width=0.5mm] (0,0) ellipse (3cm and 0.35cm);
  
\draw[line width=0.5mm] (-4,3) circle (7mm);
\draw[line width=0.5mm] (0,3) circle (7mm);
\draw[line width=0.5mm] (4,3) circle (7mm);
\draw[line width=0.5mm] (-4,-3) circle (7mm);
\draw[line width=0.5mm] (0,-3) circle (7mm);

\draw[line width=1.2] (-2.5,0.5)--(-3.5,2.1);
\draw[line width=1.2] (2.5,0.5)--(3.5,2.1);
\draw[line width=1.2] (0,0.6)--(0,2); 
\draw[line width=1.2] (0,-0.6)--(0,-2); 
\draw[line width=1.2] (1.3,-0.5)--(2,-2.8); 
\draw[line width=1.2] (1.8,-0.5)--(2.9,-2.8); 
\draw[line width=1.2] (2.3,-0.5)--(3.8,-2.8); 
\draw[line width=1.2] (-2.5,-0.5)--(-3.5,-2.1);
\end{tikzpicture}
\caption{Tutte set $A$ splitting $G$ into even and odd components (possibly single nodes).} \label{fig:decomp2}
\end{figure}

\noindent
We are now ready to derive our first main 
result.\footnote{For $n$ is odd, the upper bound in Theorem \ref{thm_graph} can be slightly strengthened to
$\frac{n^2-1}{4n}$~\cite{hof2016weight}.}

\begin{theorem}\label{thm_graph}
 Let $G=(N,E)$ be a graph of order $n$. Then $\alpha_G \le \frac{1}{4}n.$ 
\end{theorem}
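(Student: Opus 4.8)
The plan is to use the Gallai--Edmonds decomposition of $G$ to reduce the problem to the bipartite case handled by Corollary~\ref{cor_bipartite}. Let $A$ be the Tutte set of $G$, and let $O_1,\dots,O_r$ be the odd components and $C_1,\dots,C_s$ the even components of $G-A$. On every even component and every factor-critical odd component I would start from the uniform payoff $p\equiv\frac12$; this is feasible there (all edges inside such a component are satisfied) and, crucially, any independent set $L$ meeting such a component of order $m$ contributes at most $\frac{m-1}{2}\cdot\frac12<\frac m4$ to $p(L)$ when the component is factor-critical (odd, so the largest independent set has at most $\frac{m-1}{2}$ vertices -- actually we only need $\le m/2$), and at most $\frac m2\cdot\frac12=\frac m4$ when the component is even. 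So the even components and the odd components are individually ``safe'' with room to spare on the odd ones.

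The delicate part is the interaction between $A$ and the odd components, since edges leave $A$ into the $O_i$'s and an independent set may pick a vertex of $A$ together with vertices from several odd components, thereby combining their contributions. Here I would exploit that $A$ is matched completely into the odd components: form the auxiliary bipartite graph $H$ with sides $A$ and $\{O_1,\dots,O_r\}$, placing an edge whenever some vertex of $A$ is adjacent to $O_i$. By the Gallai--Edmonds matching property, $A$ can be matched into $\{O_1,\dots,O_r\}$ in $H$, so $H$ satisfies the hypothesis of Proposition~\ref{prop_bipartite}. Apply Corollary~\ref{cor_bipartite} to $H$ to obtain weights $q\ge0$ with $q\ge\frac12$ on the $A$-side. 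I would then build the payoff on $G$ by putting $p_a = q_a$ for $a\in A$, and on each odd component $O_i$ distribute the ``mass'' $q_{O_i}$ among the vertices of $O_i$ -- concretely, since $O_i$ is factor-critical, give each vertex of $O_i$ weight $\max\{\tfrac12,\;\text{something reflecting }q_{O_i}\}$ so that every edge from $A$ into $O_i$ is covered (the $A$-endpoint already has weight $\ge\frac12$ plus, if needed, the $O_i$-endpoint is boosted), while keeping the total weight on $O_i$ controlled by $\frac{|O_i|}{2}$ plus the transferred $q_{O_i}$.

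The accounting step is then to bound $p(L)$ for an arbitrary independent set $L$. Split $L$ into its parts in the even components, the odd components, and $A$. The even-component part contributes at most $\sum \frac{|C_j|}{4}$. For the $A$-part together with the odd-component parts, the point is that an independent set, restricted to $A\cup O_i$, behaves (up to the factor-critical slack) like an independent set in the auxiliary bipartite graph $H$: choosing a vertex of $O_i$ into $L$ ``blocks'' the neighbours of $O_i$ in $A$, exactly as in $H$. So the combined contribution of $A$ and the $O_i$'s is bounded, via the Corollary applied to $H$, by $\frac14(|A|+r)$ roughly, and then one uses $|O_i|\ge1$ to convert the count $r$ of odd components into the true vertex count $\sum|O_i|$, with the factor-critical slack ($\frac{|O_i|-1}{2}$ rather than $\frac{|O_i|}{2}$) absorbing the discrepancy. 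Summing, $p(L)\le\frac14\bigl(\sum|C_j|+|A|+\sum|O_i|\bigr)=\frac14 n$.

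I expect the main obstacle to be making the last paragraph precise: setting up the correspondence between independent sets of $G$ and (weighted) independent sets of the auxiliary bipartite graph $H$ so that the bound from Corollary~\ref{cor_bipartite} transfers cleanly, and simultaneously verifying that the payoff defined by spreading $q_{O_i}$ over a factor-critical component really does satisfy $p_i+p_j\ge1$ on every internal edge while not overspending the budget $\frac{|O_i|-1}{2}+q_{O_i}$. The tension is that factor-critical components can be large and their internal structure constrains how weight can be placed; the saving grace is the strict slack $\frac{|O_i|-1}{2}<\frac{|O_i|}{2}$ on each odd component, which is precisely what pays for the extra mass $q_{O_i}$ pushed in from $A$ -- one needs to check the arithmetic $\frac{|O_i|-1}{2}+q_{O_i}\le\frac{|O_i|}{2}+(\text{its share of }\frac14|A|)$ works out globally after summing over $i$.
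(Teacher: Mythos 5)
Your overall strategy is exactly the paper's: take the Gallai--Edmonds decomposition, contract the odd components to get a bipartite graph on $A\cup B$ satisfying the hypothesis of Proposition~\ref{prop_bipartite}, pull back the payoff from Corollary~\ref{cor_bipartite}, and do the accounting component by component using factor-criticality. The accounting you sketch in the third paragraph is also essentially the paper's: even components contribute at most $\frac{|C|}{4}$, odd components of size at least $3$ at most $\frac{|C|-1}{4}$, and the $A$-plus-odd-component interaction is charged to the bipartite bound $\frac14(|A|+r)$.

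The gap is in the step you yourself flag as the main obstacle: how to transfer the weight $q_{O_i}$ from the contracted vertex back onto the component $O_i$. The resolution is much simpler than what you anticipate, and your tentative formula is actually wrong. The paper assigns $\bar p$ only to the vertices of $A$ and to the \emph{singleton} odd components, and gives the flat value $\frac12$ to every vertex of every even component and every odd component of size at least $3$. No ``distribution'' of $q_{O_i}$ over a large factor-critical component is needed: since $q\ge\frac12$ on $A$, every edge from $A$ into a non-singleton component is already covered by the $\frac12$ on the component side, all internal edges are covered, and the independent-set bound $\frac{|O_i|-1}{2}$ keeps the contribution at $\frac{|O_i|-1}{4}$, which is exactly the budget these components receive in the sum $n-n_0$. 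The only components that genuinely need the bipartite payoff are the singletons, and there the transfer is trivial because there is nothing to distribute. By contrast, your proposed rule of giving each vertex of $O_i$ weight $\max\{\frac12,\dots\}$ fails precisely on singleton odd components: a singleton can be wholly contained in an independent set, so giving it weight at least $\frac12$ breaks the bound --- for the star $K_{1,n-1}$ all leaves are singleton odd components and would receive total weight $\frac{n-1}{2}>\frac n4$, whereas the bipartite payoff correctly gives each leaf $\lambda=\frac1n$. So the missing idea is not a clever way to spread mass over factor-critical components, but the observation that you should not spread it at all: singletons inherit $\bar p$, everything else gets $\frac12$, and the slack $\frac{|O_i|-1}{2}$ versus $\frac{|O_i|}{2}$ is what makes the arithmetic close.
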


\begin{proof}
 Let $A\subseteq N$ be a Tutte set. Contract each odd component in $G-A$ to a single vertex and let $B$ denote the resulting
 set of vertices. The subgraph $\bar{G}$ induced by $A \cup B$ then satisfies the assumption of Corollary \ref{cor_bipartite}. Let 
 $\bar{p} \in \R^{|A|+|B|}$ be the corresponding payoff vector.  We define $p \in \mathbb{R}^n$ by setting
 $p_i=\bar{p}_i$ for every vertex $i \in A$ and every vertex~$i$ that corresponds to an odd component of size~$1$ in $G-A$.
 All other vertices get $p_j=\frac{1}{2}$.
 
 It is straightforward to check that $p \ge 0$ and $p_i+p_j \ge 1$. Indeed, $\bar{p} \ge \frac{1}{2}$ everywhere except on $B$, so
 the only critical edges $ij$ have $i \in A$ and $j$ a singleton odd component. But in this case $p_i+p_j =\bar{p}_i+\bar{p}_j \ge 1$.
 Thus we are left to prove that for every independent set $L \subseteq N$, $p(L) \le \frac{1}{4}n$. Let $B_0$ denote the set
 of singleton odd components $i \in B$, $L_0 := (L \cap A) \cup (L\cap B_0)$ and $n_0 := |A|+|B|$. Clearly, $L_0$ is an independent 
 set in the bipartite graph  $\bar{G}$ , and $p=\bar{p}$ on $L_0$. We thus conclude that $p(L_0) \le \frac{1}{4}n_0$.
 
 Next let us analyze $L\cap C$ where $C \subseteq N\backslash A$ is an even component.  $C$ is perfectly
 matchable, implying that $L$ contains at most $|C|/2$ vertices of $C$. So $p(L\cap C) \le\frac{1}{4}|C|$.
 A similar argument applies to odd components. Let $C$ be an odd component in $G-A$ of size at least~$3$. Then certainly
 $L$ cannot contain all vertices of $C$, so there exists some $i \in C\backslash L$. Since $C$ is factor-critical, $C\backslash i$ is 
 perfectly matchable, 
 implying that $L$ can contain at most half of $C\backslash i$. Thus $|L \cap C| \le (|C|-1)/2$ and
 $p(L\cap C) \le (|C|-1)/4$. 
 
 Summarizing, $n-n_0 = |N| - (|A|+|B|)$ is the sum over all $|C|$, where $C$ is an even component plus the sum over all
 $|C|-1$ where $C$ is an odd component, and $p(L \backslash L_0)$ is at most a $\frac{1}{4}$ fraction of this,
 finishing the proof. \qed
\end{proof}
 
We like to mention that both decompositions that we use to define the payoff $p$ can be computed efficiently. For the Edmonds--Gallai 
decomposition, this is a well-known fact (see, for example, \cite{lovasz2009matching}). For the decomposition into well-spread
subgraphs, this follows from the observation that deciding whether $\max_S \frac{|S|}{|N(S)|} \le r$ is equivalent to 
$\min_S r|N(S)|-|S| \ge 0$, which amounts to minimizing the submodular function $f(S)=r|N(S)|-|S|$; 
see, for example,~\cite{Sc00} for a strongly polynomial-time algorithm or Appendix~\ref{a-proof}. 

\subsection{No Minimal Winning Sets of Size~3}\label{s-2}
We now deal shortly with the more general case where there are, in addition, minimal winning coalitions of size~$4$ or larger. First 
recall how the
payoff $p$ that we proposed in Corollary \ref{cor_bipartite} works. For a bipartite graph $G=(A\cup B,E)$, split into well-spread
subgraphs $G_i=(A_i \cup B_i, E_i)$ with parameter $\l_i$, we let $p \equiv \l_i$ on $B_i$. So for $\l_i < \frac{1}{4}$, $p$ may be 
infeasible, that is, we may encounter winning coalitions $W$ of size~$4$ or larger with $p(W) <1$. This problem can easily be
remedied by raising $p$ a bit on each $B_i$ and decreasing it accordingly on $A_i$. Indeed, the standard $(\l, 1-\l)$ allocation
rule proposed in Lemma \ref{lemma_well} is based on the simple fact that $\l(1-\l) \le \frac{1}{4}$, which gives us some flexibility
for modification in the case where $\l$ is small. More precisely, defining the payoff to be $p :\equiv \frac{1}{4(1-\l)} > \frac{1}{4}$ on
$B$ and $1-p <\frac{3}{4}$
on $A$ for a bipartite graph $(G=(A \cup B,E)$, well-spread with parameter $\l$, would work as well and thus solve the problem.
Indeed, the unique independent
set $L$ that maximizes $p(L)$ is $L=B$ in this case, which gives $p(L)=p(B)=|B|/(4(1-\l))= \frac{1}{4}n$. 

There is one thing that needs to be taken care of. Namely, 
in Proposition \ref{prop_bipartite} we assumed that $G=(A \cup B, E)$ has no 
isolated vertices, an assumption that can be made without loss of generality if we only have $2$-element winning coalitions. Now we may have isolated 
vertices that are part of winning coalitions of size~$4$ or larger. But this does not cause any problems either. We simply assign $p:=
\frac{1}{4}$ to these isolated vertices to ensure that indeed all winning coalitions $W$ have $p(W) \ge 1$. Formally, this can also be seen as an extension
of our decomposition: if $G=(A \cup B, E)$ contains isolated vertices, then they are all contained in $B$ (once we assume that $A$
can be completely matched into $B$). So the set of isolated vertices can be seen as a {\lq\lq}degenerate{\rq\rq} well-spread final subgraph 
$(A_k \cup B_k, E_k)$ with $A_k= \emptyset$ and parameter $\l_k=0$. Our proposed payoff $p \equiv \frac{1}{4(1-\l_k)}$ would then
indeed assign $p= \frac{1}{4}$ to all isolated vertices.

It remains to observe that
when we pass to general graphs, no further problems arise. Indeed, all that happens is that vertices in even and odd components 
get payoffs $p=\frac{1}{2}$ which certainly does no harm to the feasibility of $p$. Thus we have proved the following result.

\begin{corollary}\label{cor_not3}
Let $(N,v)$ be a simple game with no minimal winning coalition of size~$3$. Then $\alpha(N,v) \le \frac{1}{4}n$. ~ 
\end{corollary}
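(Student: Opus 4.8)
The plan is to reuse the construction behind Theorem~\ref{thm_graph}, applied to the graph of $2$-element minimal winning coalitions, and to rescale the payoff on the well-spread blocks so that the larger minimal winning coalitions are also satisfied. By the reductions at the start of Section~\ref{sec_3ornot3} we may assume that every player of $N$ lies in a minimal winning coalition and that none of these is a singleton, so that every minimal winning coalition has size~$2$ or size~$\geq 4$. Let $G=(N,E)$ be the graph whose edges are the $2$-element minimal winning coalitions. Unlike in Theorem~\ref{thm_graph}, $G$ may now have isolated vertices, namely the players occurring only inside minimal winning coalitions of size~$\geq 4$.

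Proceeding as in the proof of Theorem~\ref{thm_graph}, I would pick a Tutte set $A$ of $G$, contract each odd component of $G-A$ to a single vertex to obtain a bipartite graph $\bar G$ on $A\cup B$ in which $A$ can be matched into $B$ (each isolated vertex of $G$ is a singleton odd component and hence lands in $B$), and decompose $\bar G$ by Proposition~\ref{prop_bipartite} into well-spread blocks $G_i=(A_i\cup B_i,E_i)$ with $\lambda_1\geq\dots\geq\lambda_k$ and no edges from $A_i$ to $B_j$ when $i<j$; the isolated vertices form the degenerate last block, with $A_k=\emptyset$ and $\lambda_k=0$. The one modification to Corollary~\ref{cor_bipartite} is the payoff on $\bar G$: instead of $1-\lambda_i$ on $A_i$ and $\lambda_i$ on $B_i$, I set $p\equiv\frac{1}{4(1-\lambda_i)}$ on $B_i$ and $p\equiv 1-\frac{1}{4(1-\lambda_i)}$ on $A_i$. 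Since $\lambda_i\leq\frac12$, this keeps $\frac14\leq p\leq\frac12$ on every $B_i$ (in particular $p=\frac14$ on the isolated vertices) and $\frac12\leq p\leq\frac34$ on every $A_i$. All remaining vertices, i.e.\ those in the even components and in the odd components of size~$\geq 3$ of $G-A$, get $p=\frac12$, exactly as in Theorem~\ref{thm_graph}.

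It then remains to verify that $p$ is feasible and that $p(L)\leq\frac14 n$ for every losing coalition $L$. \emph{Feasibility}: it suffices that $p(W)\geq 1$ for every minimal winning coalition $W$. If $|W|\geq 4$ then $p(W)\geq\frac14|W|\geq 1$ since $p\geq\frac14$ at every vertex; if $|W|=2$ then $W$ is an edge of $G$, and every such edge has both endpoints with $p\geq\frac12$ except those joining some $A_i$ to a singleton odd component $\{b\}\subseteq B_j$, for which $j\leq i$ by Proposition~\ref{prop_bipartite}, hence $\lambda_j\geq\lambda_i$ and $p_b=\frac{1}{4(1-\lambda_j)}\geq\frac{1}{4(1-\lambda_i)}=1-p_a$. \emph{The bound}: a losing coalition $L$ is independent in $G$, so each $L\cap(A_i\cup B_i)$ is independent in the well-spread graph $G_i$, and repeating the computation of Lemma~\ref{lemma_well} with the new payoff gives $p\bigl(L\cap(A_i\cup B_i)\bigr)\leq\frac14(|A_i|+|B_i|)$ --- the inequality needed there, $\lambda_i\bigl(1-\frac{1}{4(1-\lambda_i)}\bigr)\leq\frac14$, being equivalent to $(2\lambda_i-1)^2\geq 0$ --- so that $p\bigl(L\cap(A\cup B)\bigr)\leq\frac14(|A|+|B|)$ after summing over $i$. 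For the remaining components, exactly as in Theorem~\ref{thm_graph}, $p(L\cap C)\leq\frac14|C|$ for an even component $C$ of $G-A$ and $p(L\cap C)\leq\frac14(|C|-1)$ for an odd component $C$ of size~$\geq 3$. Since $n=(|A|+|B|)+\sum_{\text{even }C}|C|+\sum_{\text{odd }C,\,|C|\geq 3}(|C|-1)$, adding these estimates yields $p(L)\leq\frac14 n$.

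I expect the only genuine computation to be the one in the bound step: checking that inflating $p$ to $\frac{1}{4(1-\lambda_i)}$ on each $B_i$ --- exactly the adjustment that makes the size-$\geq 4$ coalitions feasible when $\lambda_i$ is small --- does not push $p\bigl(L\cap(A_i\cup B_i)\bigr)$ above $\frac14(|A_i|+|B_i|)$ on any independent set. This uses the slack in $\lambda(1-\lambda)\leq\frac14$, and it is also exactly why the argument cannot be carried through for minimal winning coalitions of size~$3$: there one would have to compensate a deficient set $S\subseteq A_i$ with $|S|/|N(S)|$ close to~$1$, but the quadratic slack vanishes at $\lambda_i=\frac12$. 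Everything else is a direct transcription of Lemma~\ref{lemma_well}, Proposition~\ref{prop_bipartite}, Corollary~\ref{cor_bipartite} and Theorem~\ref{thm_graph}.
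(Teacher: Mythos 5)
Your proposal is correct and follows essentially the same route as the paper: the same rescaled payoff $\frac{1}{4(1-\lambda_i)}$ on $B_i$ and $1-\frac{1}{4(1-\lambda_i)}$ on $A_i$, the same treatment of isolated vertices as a degenerate well-spread block with $\lambda_k=0$, and the same reduction to the general case via the Tutte-set decomposition of Theorem~\ref{thm_graph}. You spell out the feasibility check across blocks and the inequality $\lambda_i\bigl(1-\frac{1}{4(1-\lambda_i)}\bigr)\leq\frac14$ more explicitly than the paper does, but the argument is the same.
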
 

We end this section with the complementary case where all minimal winning coalitions have size~$3$.

\begin{proposition}\label{prop_all3}
 Let $(N,v)$ be a simple game with all minimal winning coalitions of size~$3$. Then $\alpha(N,v) \le \frac{1}{4}n$.
\end{proposition}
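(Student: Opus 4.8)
The plan is to reduce, as far as possible, to the graphic case already handled in Theorem~\ref{thm_graph}. Since every minimal winning coalition is a triple, the structure here is a 3-uniform hypergraph $H=(N,\mathcal{W}_{\min})$, and losing coalitions are exactly the independent sets of $H$ (subsets containing no winning triple). The obvious analogue of the $(\lambda,1-\lambda)$ trick is to look for a small ``transversal-like'' structure: we want a payoff $p\ge 0$ with $p(W)\ge 1$ on every triple $W$ and $p(L)\le\frac14 n$ on every independent set $L$. The first thing I would do is massage the instance: if some player lies in no minimal winning triple, delete it (as in the opening of Section~\ref{sec_3ornot3}); if two players $i,j$ never appear together in a winning triple, then $\{i,j\}$ being losing is automatic and imposes no constraint. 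So the relevant ``conflict graph'' $G$ on $N$ has an edge $ij$ whenever some winning triple contains both $i$ and $j$, and every independent set of $G$ is in particular a losing coalition—but not conversely, which is exactly the slack we exploit.

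The key step is a dichotomy on the maximum-matching structure of $G$ via the Gallai--Edmonds / Tutte-set decomposition, mirroring the proof of Theorem~\ref{thm_graph}. Take a Tutte set $A\subseteq N$; $G-A$ splits into even components (perfectly matchable), odd components that are factor-critical, and $A$ is matched into the odd components. On the even components and on odd components of size $\ge 3$, the argument of Theorem~\ref{thm_graph} applies verbatim: assign $p\equiv\frac12$ there, and since an independent set of $G$ (a fortiori any losing coalition restricted to such a component) misses at least half the vertices, it contributes at most $\frac14|C|$. The delicate part is the bipartite-like piece consisting of $A$ together with the singleton odd components $B_0$. Here two players of $B_0$, or a player of $A$ and a player of $B_0$, may be joined by a $G$-edge only because they sit in a common triple with a \emph{third} vertex; and crucially, a losing coalition may well contain \emph{both} endpoints of such an edge, provided it avoids the third vertex of every triple through them. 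So Corollary~\ref{cor_bipartite} does not apply directly.

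The way around this is to observe that a triple $\{i,j,k\}$ with all three vertices in $A\cup B_0$ forces at least one of the three pairwise edges to have \emph{both} endpoints outside any given losing coalition is false—rather, a losing $L$ may contain at most two of $i,j,k$. I would therefore bound $p(L)$ on the $A\cup B_0$ part by a counting argument tailored to triples: set $p\equiv 1-\lambda_i$ on the $A$-blocks and $p\equiv\frac{1}{4(1-\lambda_i)}$ on the $B_0$-blocks exactly as in the ``raising $p$ a bit'' device of Section~\ref{s-2}, and check that (i) every winning triple $W$ still has $p(W)\ge1$—this uses that $p\ge\frac14$ everywhere and that $W$ has three elements, so even $p\equiv\frac14$ would suffice on triples entirely inside $B_0$, while a triple meeting $A$ gets at least $1-\lambda+\frac12>1$-type slack; and (ii) the independent-set bound $p(L)\le\frac14 n$, which for triple-games follows because the worst case is still $L$ contained in a single $B$-block (losing sets can be large only where $p$ is small). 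The main obstacle I anticipate is verifying (i): I must confirm that after the raising operation no winning triple is left with $p$-weight below $1$, including triples straddling two different well-spread blocks $B_i,B_j$ with small parameters—this is where the bookkeeping ($\lambda_i\ge\lambda_j$, no edges from $A_i$ to $B_j$) from Proposition~\ref{prop_bipartite} has to be invoked carefully, and possibly the decomposition of $G$ has to be refined so that each winning triple is ``captured'' within a controlled number of consecutive blocks.
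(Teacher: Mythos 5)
Your plan has a fatal flaw at its very first load-bearing step. You build the conflict graph $G$ (with $ij\in E$ whenever some winning triple contains both $i$ and $j$) and then claim that on a perfectly matchable component $C$ of $G-A$ ``an independent set of $G$ (a fortiori any losing coalition restricted to such a component) misses at least half the vertices.'' The ``a fortiori'' is backwards: every independent set of $G$ is losing, but losing coalitions form a strictly larger family, since a losing coalition may contain \emph{both} endpoints of a $G$-edge as long as it avoids the third vertex of every triple through that pair. So the matching-based bound $|L\cap C|\le |C|/2$, which is the engine of Theorem~\ref{thm_graph}, simply fails here. A concrete counterexample: let the minimal winning coalitions be all triples containing a fixed player $n$. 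Then $G$ is the complete graph (so every component is perfectly matchable up to one vertex and your scheme assigns $p\equiv\frac12$ essentially everywhere), yet $L=N\setminus\{n\}$ is losing with $p(L)=\frac{n-1}{2}>\frac14 n$. The same confusion undermines the $A\cup B_0$ part, where you acknowledge Corollary~\ref{cor_bipartite} does not apply and the proposed repair remains a sketch whose key verification (your item (ii)) is exactly the false independent-set bound again. In short, the entire Gallai--Edmonds machinery is tied to the identity ``losing $=$ independent,'' which holds only in the $2$-uniform case.

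The paper's proof takes a completely different, and much shorter, route that you should compare against: try $p\equiv\frac13$, which is feasible because every winning coalition has size at least $3$. Either $\max_L p(L)\le\frac14 n$ and you are done, or some losing coalition $L$ has $|L|>\frac34 n$. In the latter case switch to $\tilde p\equiv 1$ on $N\setminus L$ and $\tilde p\equiv 0$ on $L$: since $L$ is losing, no winning coalition is contained in $L$, so every winning coalition meets $N\setminus L$ and gets $\tilde p$-value at least $1$, while every losing coalition gets at most $|N\setminus L|<\frac14 n$. This dichotomy (a uniform payoff, or else a $0$/$1$ payoff concentrated on the small complement of a huge losing set) uses no graph structure at all, and it is the idea your proposal is missing.
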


\begin{proof}
 We try $p :\equiv \frac{1}{3}$, which is certainly feasible. If this yields $\max p(L) \le \frac{1}{4}n$, we are done.
 Otherwise, there exists a losing coalition $L \subseteq N$ with $p(L) = \frac{1}{3}|L| > \frac{1}{4}n$, or equivalently, $|L|>\frac{3}{4}n$.
 In this case we use an alternative payoff $\tilde{p}$ given by $\tilde{p}\equiv 1$ on $N\backslash L$ and $\tilde{p} \equiv 0$ on $L$.
 Since $|N\setminus L| < \frac{1}{4}n$, this ensures $\tilde{p}(\tilde{L}) <\frac{1}{4}n$ for any losing coalition $\tilde{L}$.
 On the other hand, $\tilde{p}$ is feasible,
 since a winning coalition $W$ cannot be completely contained in $L$, that is, there exists a player $i \in W$ with $\tilde{p}_i=1$
 and hence $\tilde{p}(W) \ge 1$. \qed
\end{proof}

We note that Proposition \ref{prop_all3} is a pure existence result. To compute $\tilde{p}$ it requires to solve a
maximum independent set problem in $3$-uniform hypergraphs, 
which is \NP-hard. 
This can be seen from a reduction from the 
maximum independent set problem in graphs, which is well known to be \NP-hard (see~\cite{GJ79}). Given a graph $G=(V,E)$, construct a
$3$-uniform hypergraph $\bar{G}$ as follows. Add $n=|V|$ new vertices labeled $1, \dots, n$ and extend each edge $e=ij \in E$
to $n$ edges $\{i,j,1\}, \dots, \{i,j,n\}$ in $\bar{G}$. It is readily seen that a maximum independent
set of vertices in $\bar{G}$ (that is, a set of vertices that does not contain any hyperedge) consists of the $n$ new vertices
plus a maximum independent set in~$G$.

\section{Minimal Winning Coalitions of Arbitrary Size}\label{sec_general}

In this section we try to combine the ideas for the two complementary cases to derive an upper bound $\alpha \le \frac{2}{7}$
for the general case. The payoffs $p$ that we consider will all satisfy $p \ge \frac{1}{4}$ so that only winning coalitions of 
size~$2$ and~$3$ are of interest. The basic idea is to start with a bipartite graph $(A\cup B, E)$ representing the 
size~$2$ winning coalitions and a payoff satisfying all these. Standard payoffs that we use satisfy $p \ge \frac{1}{4}$ on $B$ 
and $p \ge \frac{1}{2}$ on $A$. Hence we have to worry only about $3$-element winning coalitions contained in $B$. We seek to
satisfy these by raising the payoff of some vertices in $B$ without spending too much in total.

More precisely, consider a bipartite graph $G=(A \cup B, E)$ representing the winning coalitions of size~$2$. As before, we  
assume that $A$ can be completely matched into $B$, so that our decomposition into well-spread subgraphs $G_i=(A_i \cup B_i,
E_i)$ applies (with possibly the last subgraph $G_k=(A_k \cup B_k, E_k)$ having $A_k=\emptyset$ and $B_k$ consisting of isolated
points as explained at the end of the previous section). Recall the payoff $\bar{\l}_i :\equiv \frac{1}{4(1-\l_i)}$
on $B_i$ and $1-\bar{\l}_i$ on $A_i$ defined for the proof of Corollary \ref{cor_not3}. We first consider the following payoff 
$\bar{p} :\equiv 1-\bar{\l}_i$ on $A_i$ and $\bar{p} :\equiv \bar{\l}_i$ on $B_i$ for $\l_i \ge \frac{1}{4}$, so
$\bar{\l}_i \ge \frac{1}{3}$. For subgraphs with $\l_i < \frac{1}{4}$ (including possibly a final $\l_k=0$) we define 
$\bar{p} \equiv \frac{2}{3}$ on $A_i$ and $\bar{p} \equiv \frac{1}{3}$ on $B_i$. Thus $\bar{p} \ge \frac{1}{3}$ everywhere, 
in particular, $\bar{p}$ is feasible with respect to
all winning coalitions of size at least~$3$. 

Let $\bar{L}$ be a losing coalition with maximum $\bar{p}(L)$.  We define an alternative payoff $\tilde{p}$ as follows: 
For $\l_i \ge \frac{1}{4}$ we set $\tilde{p} :\equiv 1- \bar{\l}_i$ on 
$A_i$, ~$\tilde{p} :\equiv \bar{\l}_i$ on $B\cap \bar{L}$ and $\tilde{p}:\equiv \frac{1}{2}$ on $B_i\backslash \bar{L}$.
For $\l_i <\frac{1}{4}$ we set $\tilde{p} :\equiv \frac{3}{4}$ on $A_i$, ~$\tilde{p} :\equiv \frac{1}{4} $ on $B_i \cap \bar{L}$
and $\tilde{p} :\equiv \frac{1}{2} $ on $B_i \backslash \bar{L}$.

Clearly, both $\bar{p}$ and $\tilde{p}$ are feasible. We claim that a suitable combination of these two yields the desired upper bound.

\begin{lemma}\label{l-propo}
 For $p:=\frac{3}{7}\bar{p}+\frac{4}{7}\tilde{p}$ we get $\alpha = \max_L ~p(L) \le \frac{2}{7}n$.
\end{lemma}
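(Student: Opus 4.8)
The idea is to bound $p(L)$ for an arbitrary losing coalition $L$ by comparing it against the known bound $\bar p(\bar L)\le\tfrac{2}{7}n$ that we are trying to establish (so really we must prove both bounds simultaneously, or rather: we must show that for the \emph{worst} $L$ the convex combination is small). The cleanest route is to fix the losing coalition $\bar L$ maximizing $\bar p$, note that $\tilde p$ was tailor-made relative to that $\bar L$, and then argue that $p=\tfrac37\bar p+\tfrac47\tilde p$ beats $\tfrac27 n$ on \emph{every} losing coalition $L$ — including $\bar L$ itself. So first I would split any losing coalition $L$ into its pieces $L\cap A_i$ and $L\cap B_i$ across the well-spread blocks, and handle each block separately, since both $\bar p$ and $\tilde p$ are block-constant on $A_i$ and (piecewise) block-constant on $B_i$. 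Within a block with $\lambda_i\ge\tfrac14$ we have $\bar p\equiv 1-\bar\lambda_i$ on $A_i$, $\bar p\equiv\bar\lambda_i$ on $B_i$, and $\tilde p$ agrees with $\bar p$ except that on $B_i\setminus\bar L$ it is raised to $\tfrac12$; for $\lambda_i<\tfrac14$ the analogous statement holds with constants $\tfrac23,\tfrac13$ versus $\tfrac34,\tfrac14,\tfrac12$.

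\textbf{Key steps.} (1) Using well-spreadness exactly as in Lemma~\ref{lemma_well}, if $|L\cap A_i|=\rho_i\lambda_i n_i$ (writing $n_i=|A_i|+|B_i|$) then $|L\cap B_i|\le(1-\rho_i)(1-\lambda_i)n_i$; more importantly, since $\bar L$ was also an independent set, the \emph{same} inequality holds for $\bar L$, and the part of $B_i$ that $L$ (or $\bar L$) can occupy is controlled. (2) For $\bar p$: summing the block estimates reproduces $\bar p(L)\le\max_i\bar\lambda_i\cdot|B|$-type bounds; the worst case is $L=B$, giving $\bar p(B)=\sum_i\bar\lambda_i|B_i|=\sum_i\tfrac{|B_i|}{4(1-\lambda_i)}$, and since $|B_i|=(1-\lambda_i)n_i$ this is $\tfrac14\sum_i n_i=\tfrac14 n$ for the $\lambda_i\ge\tfrac14$ blocks, while for $\lambda_i<\tfrac14$ blocks $\bar p\equiv\tfrac13$ on $B_i$ gives $\tfrac13|B_i|$ which can exceed $\tfrac14 n_i$ — this is exactly why $\bar p$ alone fails and why we need $\tilde p$. (3) For $\tilde p$: the construction forces $\tilde p(\bar L)$ small because on $B_i\cap\bar L$ the payoff is only $\bar\lambda_i$ (resp.\ $\tfrac14$), so $\bar L$ ``sees'' the cheap values, while a \emph{different} losing coalition $L$ that tries to use $B_i\setminus\bar L$ pays $\tfrac12$ there, which is expensive but then $L$ cannot also grab much of $A_i$. (4) The decisive computation: write $p(L)$ over a block as a function of $|L\cap A_i|$, $|L\cap(B_i\cap\bar L)|$, $|L\cap(B_i\setminus\bar L)|$, subject to the independence (well-spread) constraint and to $|B_i\cap\bar L|$ being whatever $\bar L$ chose, and verify the linear program value is $\le\tfrac27 n_i$. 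The weights $\tfrac37,\tfrac47$ should be exactly the ones making the two ``bad'' extreme points ($L=B_i$, costly under $\bar p$ when $\lambda_i$ small; and $L$ large in $A_i$ plus all of $B_i\setminus\bar L$, costly under $\tilde p$) tie at $\tfrac27 n_i$.

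\textbf{Reducing to the bipartite core and the even/odd components.} Before all that I would reduce, just as in Theorem~\ref{thm_graph}, from a general simple game to the bipartite setting: the payoff $p$ is $\ge\tfrac14$ everywhere so winning coalitions of size $\ge4$ are automatically satisfied; size-$3$ winning coalitions not inside $B$ contain an $A$-vertex with $p\ge\tfrac12$ plus a $B$-vertex with $p\ge\tfrac14$ — wait, that only gives $\tfrac34$, so one must be slightly more careful: actually since $p\ge\tfrac13$ everywhere from $\bar p$'s side and $\tilde p\ge\tfrac14$, the convex combination is $\ge\tfrac37\cdot\tfrac13+\tfrac47\cdot\tfrac14=\tfrac17+\tfrac17=\tfrac27$ — still not $1$ for triples. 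The right statement is that $3$-element winning coalitions \emph{entirely inside $B$} are the only worry and these are the ones the lemma's construction (raising some $B$-vertices) is designed around; coalitions meeting $A$ are fine because $p\ge 1-\bar\lambda_i\ge\tfrac12$ on $A_i$ and the other two vertices contribute $\ge\tfrac13$ each under $\bar p$ giving $\ge\tfrac12+\tfrac13+\tfrac13>1$. Then even/odd components of the Gallai–Edmonds decomposition get $p=\tfrac12$ and contribute at most a $\tfrac14$-fraction of their size to any independent set, which is $\le\tfrac27$ of it, so they only help.

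\textbf{Main obstacle.} The crux — and where I expect to actually have to do work rather than invoke Lemma~\ref{lemma_well} — is the block-wise linear program in step (4) for the $\lambda_i<\tfrac14$ blocks, because there the independence constraint interacts with the \emph{fixed} choice of $B_i\cap\bar L$ (which we do not get to optimize, it is dictated by the global maximizer $\bar L$), so one has to check that no losing coalition $L$ can simultaneously exploit the cheapness of $\bar p$ on $B_i\setminus\bar L$ and of $\tilde p$ on $B_i\cap\bar L$. I would handle this by a case split on whether $\bar L\cap B_i$ is ``large'' (close to all of $B_i$, the worst case for $\bar p$) or not, and in each case exhibit the explicit worst $L$ and evaluate; the $\tfrac37/\tfrac47$ split is presumably pinned down precisely by equating $\tfrac37\cdot\tfrac13+\tfrac47\cdot\tfrac14$ with the target slope, and the final inequality $p(L)\le\tfrac27 n$ then follows by summing the per-block bounds $p(L\cap(A_i\cup B_i))\le\tfrac27 n_i$ together with the $\le\tfrac27(\text{size})$ bound on the even/odd components, since $\sum_i n_i + \sum_{\text{components}}|C| = n$.
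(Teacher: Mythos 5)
Your overall architecture matches the paper's (fix $\bar L$ maximizing $\bar p$, form the convex combination, analyse block by block), but your decisive step (4) is formulated in a way that does not work: the per-block bound $p\bigl(L\cap(A_i\cup B_i)\bigr)\le\tfrac{2}{7}n_i$ is \emph{false} in general. Concretely, take a degenerate block with $\lambda_i=0$ (isolated vertices of the size-$2$ graph, kept around because of larger winning coalitions) and suppose the global maximizer $\bar L$ happens to avoid $B_i$ entirely, while some other losing coalition $L$ contains all of $B_i$; this is possible because whether $\bar L$ can afford to include $B_i$ depends on winning triples straddling several blocks. Then on $B_i\setminus\bar L=B_i$ we have $p=\tfrac{3}{7}\cdot\tfrac13+\tfrac{4}{7}\cdot\tfrac12=\tfrac{3}{7}$, so $p(L\cap B_i)=\tfrac{3}{7}n_i>\tfrac{2}{7}n_i$, and your LP (in which $|B_i\cap\bar L|$ is a fixed parameter you cannot further constrain, since maximality of $\bar L$ is a global, not a blockwise, property) has value $\tfrac{3}{7}n_i$. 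The missing idea is the one-line global maneuver $p(L)=\tfrac{3}{7}\bar p(L)+\tfrac{4}{7}\tilde p(L)\le\tfrac{3}{7}\bar p(\bar L)+\tfrac{4}{7}\tilde p(L)$, after which one bounds the \emph{mixed} quantity $\tfrac{3}{7}\bar p(\bar L\cap(A_i\cup B_i))+\tfrac{4}{7}\tilde p(L\cap(A_i\cup B_i))$ block by block. This is what makes the accounting close: the single parameter $\rho_i$, defined by $|\bar L\cap B_i|=(1-\rho_i)(1-\lambda_i)n_i$, controls both terms simultaneously (via well-spreadness for the first, and via $\tilde p(L_i)\le\tilde p(B_i)$ for the second, since exactly $\rho_i(1-\lambda_i)n_i$ vertices of $B_i$ were raised to $\tfrac12$), and the weights $\tfrac37,\tfrac47$ make the $\rho_i=0$ and $\rho_i=1$ extremes both land at $\tfrac{2}{7}n_i$. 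Your version, which puts $L$ into both terms, loses precisely this coupling.

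A secondary point: your discussion of feasibility on winning triples trails off without resolving it. The convex combination is $\ge 1$ on every winning coalition because \emph{each} of $\bar p$ and $\tilde p$ is individually feasible; for $\tilde p$ on a triple $W\subseteq B$ the reason is that $W$, being winning, cannot be contained in the losing coalition $\bar L$, so some vertex of $W$ lies in $B\setminus\bar L$ (or in $A$) and carries $\tilde p\ge\tfrac12$, giving $\tilde p(W)\ge\tfrac12+\tfrac14+\tfrac14=1$. You never invoke the fact $W\not\subseteq\bar L$, which is the whole point of defining $\tilde p$ relative to a \emph{losing} coalition.
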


\begin{proof}
 Let $\bar{L}$ as above be a losing coalition with maximum $\bar{p}$-value. Let $\rho_i \in [0,1]$ such that $|\bar{L}\cap B_i|= 
 (1-\rho_i)|B_i| = (1-\rho_i)(1-\l_i) n_i$. For $\l_i \ge \frac{1}{4}$ we then get (using well-spreadedness)
 \begin{equation}\label{eq_barsmalli}
 \bar{p}(\bar{L}_i) \le \left[\rho_i\l_i(1-\bar{\l}_i) +(1-\rho_i)(1-\l_i)\bar{\l}_i\right]n_i. 
 \end{equation}
For $\l_i \ge \frac{1}{4}$, the alternative payoff $\tilde{p}$  equals $\bar{p}$ on $A_i \cup B_i$
except that vertices in $B_i \backslash \bar{L} $ are raised to $\frac{1}{2}$. So a losing coalition $L$ with $L_i := L \cap (A_i\cup B_i)$ 
obviously has $\tilde{p}(L_i) \le \tilde{p}(B_i)$ (as vertices in $B_i$ are relatively more profitable than vertices in $A_i$), \emph{i.e.,}
\begin{equation}\label{eq_tildesmalli}
 \tilde{p}(L_i) = \left[(1-\rho_i)(1-\l_i)\bar{\l}_i+\rho_i(1-\l_i)\frac{1}{2}\right]n_i,
\end{equation}
because, by definition of $\tilde{p}$, exactly $\rho_i(1-\l_i)n_i$ vertices in $B_i$ are raised to $\frac{1}{2}$.  Hence 

\begin{align}\label{eq_smalli}
\frac{3}{7}\bar{p}(\bar{L}_i)+\frac{4}{7}\tilde{p}(L_i) 
&\le \rho_i\left[\frac{3}{7}\l_i(1-\bar{\l}_i)+\frac{4}{7}(1-\l_i)\frac{1}{2}\right]n_i+ 
(1-\rho_i)(1-\l_i)\bar{\l}_i\left(\frac{3}{7}+\frac{4}{7}\right)n_i\\ \nonumber
&\le \rho_i\left[\frac{2}{7}\l_i +\frac{2}{7}(1-\l_i)\right]n_i+(1-\rho_i)\frac{1}{4}n_i \le \frac{2}{7}n_i. \nonumber
\end{align}
where we have used $1-\bar{\l}_i \le \frac{2}{3}$ and $(1-\l_i)\bar{\l}_i=\frac{1}{4}$.\\

For $\l_i < \frac{1}{4}$ (\emph{i.e.,} $\bar{\l}_i < \frac{1}{3})$, we conclude similarly that
\begin{equation}\label{eq_barlargei}
 \bar{p}(\bar{L}_i) \le \left[\rho_i\l_i\frac{2}{3}+(1-\rho_i)(1-\l_i)\frac{1}{3}\right]n_i
\end{equation}
and 
\begin{equation}\label{eq_tildelargei}
 \tilde{p}(L_i) = \left[(1-\rho_i)(1-\l_i)\frac{1}{4} + \rho_i(1-\l_i)\frac{1}{2}\right]n_i.
\end{equation}
Thus,
\begin{align}
 \frac{3}{7}\bar{p}(\bar{L}_i)+\frac{4}{7}\tilde{p}(L_i) 
&\le  \left[\rho_i\left(\frac{2}{7}\l_i+\frac{2}{7}(1-\l_i)\right)+(1-\rho_i)(1-\l_i)\left(\frac{1}{7}+\frac{1}{7}\right)\right]n_i\\
 &\le \left[\rho_i\frac{2}{7}+(1-\rho_i)\frac{2}{7}\right]n_i = \frac{2}{7}n_i.
\end{align}
Now the claim follows by observing that 
$p(L) = \frac{3}{7}\bar{p}(L) +\frac{4}{7}\tilde{p}(L) \le \frac{3}{7}\bar{p}(\bar{L})+\frac{4}{7}\tilde{p}(L)$. \qed 
\end{proof}

Hence we obtained the following theorem.

\begin{theorem}
For every simple game $(N, v)$, $\alpha(N,v)\leq \frac{2}{7}n$.
\end{theorem}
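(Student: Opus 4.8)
\noindent\emph{Proof idea.}
The plan is to combine Lemma~\ref{l-propo} with the Gallai--Edmonds reduction already used for Theorem~\ref{thm_graph}. First I would apply the opening reductions of Section~\ref{sec_3ornot3} so that every minimal winning coalition has size at least~$2$ and these coalitions cover~$N$; since all payoffs constructed below will satisfy $p\ge\tfrac14$, only the winning coalitions of size~$2$ and~$3$ need to be checked for feasibility.

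Next, treating the size-$2$ winning coalitions as the graph $G=(N,E)$, I would fix a Tutte set $A$, contract each odd component of $G-A$ to a single vertex to obtain $B$, and (after absorbing isolated vertices into $B$, as allowed by Proposition~\ref{prop_bipartite} and the discussion of degenerate blocks in Section~\ref{s-2}) work with the bipartite graph $\bar G$ on $A\cup B$. Decomposing $\bar G$ into well-spread blocks $G_i=(A_i\cup B_i,E_i)$ with $\l_1\ge\cdots\ge\l_k$ and no edges from $A_i$ to $B_j$ when $i<j$, I would use exactly the payoffs $\bar p$ and $\tilde p$ defined just before Lemma~\ref{l-propo}, set $p:=\tfrac37\bar p+\tfrac47\tilde p$ on $A\cup B$, and extend $p$ to the rest of $N$ by assigning $\tfrac12$ to every vertex in an even component of $G-A$ and to every vertex of an odd component of size at least~$3$, keeping the $\bar G$-values on the singleton odd components.

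Two things then have to be verified. \emph{Feasibility:} the size-$2$ coalitions are satisfied because $\bar p$ and $\tilde p$ satisfy them on $\bar G$ (well-spreadedness together with the matching of $A$ into $B$) while the remaining vertices carry $\tfrac12$; a size-$3$ winning coalition $W$ meeting $A$ or an even/odd component contains a vertex of payoff $\ge\tfrac12$ and two of payoff $\ge\tfrac14$, so $p(W)\ge1$; and a size-$3$ winning coalition inside $B$ is not contained in the losing coalition $\bar L$, hence contains a vertex of some $B_j\setminus\bar L$ where $\tilde p=\tfrac12$, which again gives $p(W)\ge1$. \emph{The bound:} for any losing coalition $L$, split off $L_0:=(L\cap A)\cup(L\cap B_0)$, with $B_0$ the set of singleton odd components; Lemma~\ref{l-propo} bounds $p(L_0)\le\tfrac27(|A|+|B|)$, while the contribution of $L$ to each even component and to each odd component of size at least~$3$ is at most a quarter of its order (perfect matchability, resp.\ factor-criticality, exactly as in Theorem~\ref{thm_graph}), summing to at most $\tfrac14\big(n-|A|-|B|\big)\le\tfrac27\big(n-|A|-|B|\big)$; adding the two parts gives $p(L)\le\tfrac27 n$.

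The main obstacle I anticipate is the feasibility bookkeeping for the ``alternative'' payoff $\tilde p$ on the $3$-element winning coalitions: $\tilde p$ deliberately drops to $\tfrac14$ on the part of each low-$\l_i$ block that lies in $\bar L$, and the only thing preventing a winning triple inside $B$ from having $\tilde p$-value below~$1$ is that such a triple cannot sit inside the fixed losing coalition $\bar L$ — so one must be careful that $\bar L$ is genuinely a losing coalition of the original game (not merely an independent set of $\bar G$) and that this choice does not disturb the estimates feeding into Lemma~\ref{l-propo}. Beyond that, the proof is just a matter of assembling the pieces already established.
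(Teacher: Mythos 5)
Your proposal is correct and follows essentially the same route as the paper: reduce to winning coalitions of size $2$ and $3$ via payoffs bounded below by $\tfrac14$, pass through the Gallai--Edmonds/Tutte decomposition to a bipartite graph on $A\cup B$, apply the well-spread decomposition with the payoffs $\bar p$ and $\tilde p$ and Lemma~\ref{l-propo}, and absorb the even and odd components at value $\tfrac12$ exactly as in Theorem~\ref{thm_graph}. The feasibility subtlety you flag (that a winning triple cannot lie inside the losing coalition $\bar L$, so it meets a vertex where $\tilde p=\tfrac12$) is precisely the point the paper relies on; you have merely made explicit the general-graph bookkeeping that the paper leaves implicit.
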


\section{Complete Simple Games}\label{sec_complete}

Recall that a  simple game $(N,v)$ is complete if for a suitable ordering,  say, $1 \succeq 2 \succeq \dots \succeq n$
indicating that $i$ is more powerful than $i+1$ in the sense 
that $v(S\cup \{i\}) \ge v(S \cup \{i+1\})$ for any coalition $S \subseteq N\backslash \{i,j\}$.
Intuitively, the class 
of complete simple games is {\lq\lq}closer{\rq\rq} to weighted voting games than general simple games. The 
next result quantifies this expectation.

\begin{theorem}
  \label{t-csg}
  A complete simple game $(N,v)$ has $\alpha\le  \sqrt{n}\ln n$. 
\end{theorem}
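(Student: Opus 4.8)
The plan is to exploit the desirability order to reduce a complete simple game to a manageable combinatorial object: the ``shift-minimal'' winning coalitions. Recall that in a complete game with ordering $1\succeq 2\succeq\dots\succeq n$, a winning coalition stays winning under any ``shift up'' (replacing a player by a more desirable one not already present), and losing coalitions are closed under ``shift down''. The key structural fact I would use is that a complete simple game is characterized by a bounded number of \emph{shift-minimal} winning coalitions together with the invariant vector $(n_1,\dots,n_t)$ recording how many players sit at each equivalence class of $\succeq$. I would first partition $N$ into the classes of the desirability preorder and argue that, for constructing a good payoff $p$, it suffices to assign a single weight $w_k$ to each class $k$; then the feasibility constraints ``$p(W)\ge 1$ for all winning $W$'' reduce to finitely many inequalities indexed by shift-minimal winning coalitions, and the objective $\max_L p(L)$ is attained at a shift-maximal losing coalition.

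Next I would set up the weights explicitly. The natural guess, given the target bound $\sqrt n\,\ln n$, is a geometric-type or harmonic-type schedule: assign to a player in class $k$ a weight proportional to something like $1/(\text{threshold gap at level }k)$, so that a winning coalition — which by shift-minimality must ``reach high enough'' in the order — accumulates total weight at least $1$, while a losing coalition, being bounded away from that, cannot accumulate more than roughly $\sqrt n$ in each of $O(\ln n)$ scales. Concretely I expect the argument to bucket the players into $O(\log n)$ groups by weight magnitude (dyadic scales), show that within each bucket any losing coalition contributes $O(\sqrt n)$ because a winning coalition forces the complementary bound, and sum over the $O(\log n)$ buckets. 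A cleaner route may be to invoke the characterization of complete games as intersections/combinations of few weighted games or to use the integer ``matrix'' representation of complete games (the rows being shift-minimal winning coalitions) and apply an averaging or LP-duality argument over that bounded-size matrix, where the $\sqrt n$ comes from balancing the number of classes against the sizes $n_k$ (AM–GM: either there are few classes, so one class is large and we can give it weight $\sim 1/\sqrt n$, or there are many classes, in which case the per-class counts are small).

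The main obstacle I anticipate is getting the two scales to interact correctly: a losing coalition can be large (up to $n$ players) spread across many classes, and one must prevent it from collecting weight $1$ at every one of the $\Theta(\log n)$ scales simultaneously — the bound must be $\sqrt n$ \emph{per scale}, not $\sqrt n$ total, and then the $\ln n$ is the number of scales. Making this rigorous requires showing that at each weight-scale the ``winning pressure'' from that scale's classes caps the losing mass at $O(\sqrt n)$, which is where completeness (shift-closure) is essential: it guarantees that if a losing coalition contained too many high players at some level, adding one more would still be losing, contradicting that some nearby coalition is winning. I would handle the degenerate cases (a class with no incident shift-minimal winning coalition, or a dominant veto-type player) by the same reduction used at the start of Section~\ref{sec_3ornot3}: fix such players' payoffs to $0$ or $1$ and delete them. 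Finally I would check the arithmetic so that the constant in front of $\sqrt n\,\ln n$ is indeed at most $1$, possibly absorbing slack by choosing the dyadic base slightly larger than $2$.
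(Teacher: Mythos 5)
Your proposal is a plan rather than a proof, and the step you yourself flag as ``the main obstacle'' --- that at each dyadic weight scale a losing coalition can collect only $O(\sqrt n)$ mass --- is precisely the content that is missing. You never commit to a concrete weight assignment, and the averaging/AM--GM heuristic over desirability classes is not developed far enough to check. (A side remark: the number of shift-minimal winning coalitions of a complete game need not be bounded, so the reduction to ``finitely many inequalities indexed by shift-minimal winning coalitions'' does not by itself buy polynomial control; it is also not needed for the bound.) Your intended factorization of the bound is also different from what actually works: you aim for $p(L)=O(\sqrt n)$ per scale summed over $\ln n$ scales with $p(W)\ge 1$, whereas the natural normalization splits the bound as $p(L)\le \ln n$ against $p(W)\ge 1/\sqrt n$.

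The mechanism that closes the gap is the following. With $1\succeq\dots\succeq n$, let $k$ be the largest index with $\{k,\dots,n\}$ winning, and for $i\le k$ let $s_i$ be the minimum size of a winning coalition contained in the suffix $\{i,\dots,n\}$; set $p_i:=1/s_i$ for $i\le k$ and $p_i:=p_k$ for $i>k$, so $p$ is nonincreasing. For a winning $W$ with first player $i$: if $|W|\le\sqrt n$ then $s_i\le|W|\le\sqrt n$ and already $p_i\ge 1/\sqrt n$; if $|W|>\sqrt n$ then $p(W)>\sqrt n\,p_k\ge\sqrt n/n=1/\sqrt n$. For a losing $L$, completeness gives $|L\cap\{1,\dots,i\}|\le s_i-1$ (otherwise shifting up the size-$s_i$ winning coalition of $\{i,\dots,n\}$ into $L$ would make $L$ winning), and maximizing $\sum_i x_i/s_i$ subject to $\sum_{j\le i}x_j\le s_i-1$ yields the telescoping bound
\begin{equation*}
p(L)\;\le\;\frac{s_1-1}{s_1}+\sum_{i=2}^{k}\frac{s_i-s_{i-1}}{s_i}\;\le\;\frac12+\cdots+\frac{1}{s_k}\;\le\;\ln n,
\end{equation*}
whence $p(L)/p(W)\le\sqrt n\,\ln n$. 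Note that the $\sqrt n$ enters through the case split on $|W|$, not through a per-scale cap on losing coalitions; without an argument of this kind (or a worked-out substitute for your per-scale claim), the proposal does not establish the theorem.
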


\begin{proof}
  Let $N=\{1,\dots,n\}$ be the set of players and assume without loss of generality that $1\succeq 2\succeq\dots \succeq n$.
  Let $k \in N$ be the largest number such that $\{k, \dots, n\}$ is winning.
  For $i=1, \dots, k$, let $s_i$ denote the smallest size of a winning coalition in $\{i, \dots, n\}$.
  Define $p_i:= 1/s_i$ for $i=1, \dots, k$ and $p_i:=p_k$ for $i=k+1,\dots,n$. Thus, obviously, $p_1 \ge \dots \ge p_k = \dots = p_n$. 
  
  Consider a winning coalition $W\subseteq N$ and let $i$ be the first player in $W$ (with respect to $\succeq$).
  If $|W| \le \sqrt{n}$, then $s_i \le |W| \le \sqrt{n}$ and hence $p(W) \ge p_i = \frac{1}{s_i} \ge \frac{1}{\sqrt{n}}$.
  On the other hand, if $|W| > \sqrt{n}$, then $p(W) > \sqrt{n}p_k \ge \sqrt{n}\frac{1}{n} =\frac{1}{\sqrt{n}}$.
  
  For a losing coalition $L \subseteq N$, we conclude that $|L \cap \{1, \dots , i\}| \le s_i-1$ (otherwise $L$ would dominate
  the winning coalition of size $s_i$ in $\{i, \dots ,n\}$). So $p(L)$ is bounded by 
  $$ \max \sum_{i=1}^{k} x_i\frac{1}{s_i} \text{~~subject to~~} \sum_{j=1}^{i}x_j \le s_i-1,~ i=1, \dots , k.$$
  The optimal solution of this maximization problem is easily seen to be $x_1=s_1-1, x_i=s_i-s_{i-1} \text{~for~} 2 \le i \le k$.
  Hence 
  \begin{align} \nonumber
  p(L) &\le (s_1-1)\frac{1}{s_1} + (s_2-s_1)\frac{1}{s_2} + \dots + (s_k-s_{k-1})\frac{1}{s_k}\\ \nonumber
       &\le \frac{1}{2} +   \dots + \frac{1}{s_k} \le \ln n. \nonumber
  \end{align}
  
  Summarizing, we obtain $p(L)/p(W) \le \sqrt{n} \ln n$, as claimed. \qed
  \end{proof}

In \cite{paper_alpha_roughly_weighted} it is conjectured that $\alpha = O(\sqrt{n})$ holds for complete simple games.
We direct the reader to \cite{paper_alpha_roughly_weighted} for further details, including a lower bound of order $\sqrt{n}$
as well as specific subclasses of complete simple games for which $\alpha=O(\sqrt{n})$ can be proven.

\section{Algorithmic Aspects}\label{sec_algo}

A fundamental question concerns the complexity of our original problem (\ref{eq_minmax}). For general simple games this depends on how the  game in question is given, and we refer to Section~\ref{sec_introduction} for a discussion.
Here we concentrate on the {\lq\lq}graphic{\rq\rq} case where the minimal winning coalitions are given as the edges of a graph $G$.

\begin{proposition}
 For a bipartite graph $G=(N,E)$ we can compute $\alpha_G$ in polynomial time.
\end{proposition}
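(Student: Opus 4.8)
The plan is to reduce the computation of $\alpha_G$ for a bipartite graph $G=(N,E)$ to the decomposition machinery already established in Section~\ref{s-1}, and then argue that every step in that machinery is polynomial-time computable. First I would recall that by the discussion preceding this proposition we may assume $G$ has no isolated vertices. Then I would apply the Gallai--Edmonds (Edmonds--Gallai) decomposition to compute the Tutte set $A$; since $G$ is bipartite and has no isolated vertices, the odd components of $G-A$ are all singletons and the even components are perfectly matchable, so the contracted graph $\bar G$ on $A$ together with the odd (singleton) components is itself bipartite and satisfies the hypothesis of Corollary~\ref{cor_bipartite}. Computing this decomposition is classical and polynomial (see~\cite{lovasz2009matching}), as noted in the remark following Theorem~\ref{thm_graph}.

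Next I would invoke Proposition~\ref{prop_bipartite}: its proof is constructive and repeatedly extracts the set $S\subseteq A$ maximizing $|S|/|N(S)|$, which, as remarked after Theorem~\ref{thm_graph}, reduces to minimizing the submodular function $f(S)=r|N(S)|-|S|$ for a value of $r$ located by binary search over the $O(n^2)$ candidate ratios; this is doable in strongly polynomial time (\cite{Sc00}, or Appendix~\ref{a-proof}). Iterating at most $n$ times yields the well-spread pieces $G_i=(A_i\cup B_i,E_i)$ with their parameters $\lambda_i$. Setting $p\equiv 1-\lambda_i$ on $A_i$ and $p\equiv\lambda_i$ on $B_i$, and $p\equiv\frac12$ on all remaining vertices (those in even components and in odd components of size $\geq 3$, as in the proof of Theorem~\ref{thm_graph}), gives a feasible payoff with $p(L)\leq\frac14 n$ for every independent set $L$.

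The one remaining point is that this only gives an upper bound $\alpha_G\leq\frac14 n$, not the exact value $\alpha_G$, and the proposition asks us to \emph{compute} $\alpha_G$. So the final step is to observe that for a fixed feasible $p$, evaluating $\max_L p(L)$ over independent sets $L$ is a maximum-weight independent set problem, which in a bipartite graph is polynomial-time solvable (by LP duality / König's theorem it equals $p(N)$ minus a minimum-weight vertex cover, computable via a max-flow computation). Hence, rather than arguing through the decomposition, the cleaner route is: the LP $\min\{\,t : p\geq 0,\ p_i+p_j\geq 1\ \forall ij\in E,\ p(L)\leq t\ \forall L \text{ independent}\,\}$ has exponentially many constraints of the second type, but each separates in polynomial time via the bipartite max-weight independent set oracle, so by the ellipsoid method $\alpha_G$ is computable in polynomial time; the decomposition argument above then serves to certify that the optimum is at most $\frac14 n$ and to give an explicit optimal-up-to-this-bound solution without invoking the ellipsoid method.

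The main obstacle I anticipate is precisely this gap between ``exhibiting a good feasible $p$'' (which the decomposition does cleanly) and ``solving the min--max LP exactly''; one must be careful that the separation oracle for the family $\{p(L)\leq t : L \text{ independent}\}$ is genuinely polynomial, which hinges on maximum-weight independent set being tractable in bipartite graphs — true, but it is the crux and should be stated explicitly. A secondary technical nuisance is handling the binary search over ratios $|S|/|N(S)|$ and confirming that submodular function minimization delivers the maximizer $S$ (not just the optimal value), both of which are standard but worth a sentence.
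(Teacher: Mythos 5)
Your operative argument---solving the LP $\min\{t : p\ge 0,\ p_i+p_j\ge 1\ \forall ij\in E,\ p(L)\le t\ \forall L \text{ independent}\}$ by the ellipsoid method, with the separation oracle being a maximum-weight independent set computation in a bipartite graph---is exactly the paper's proof, though you should also record the paper's observation that the optimum is a basic solution of this LP and hence has encoding length polynomial in $n$, which is what makes the binary search (or the exact optimization step) terminate in polynomial time. The lengthy detour through the Gallai--Edmonds and well-spread decompositions only certifies the bound $\alpha_G\le \frac{1}{4}n$ and, as you yourself note, is not needed for computing $\alpha_G$ exactly, so it can be dropped.
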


\begin{proof}
 Let $P \subseteq \R^n$ denote the set of feasible payoffs (satisfying $p \ge 0$ and $p_i+p_j \ge 1$ for $ij \in E$). For $\alpha \in \R$ we let
 $$P_\alpha := \{p\in P ~|~ p(L) \le \alpha \text{~for all independent ~} L \subseteq N\}.$$
Thus $\alpha_G = \min \{\alpha ~|~P_\alpha \neq \emptyset\}$. The separation problem for $P_\alpha$ (for any given $\alpha$) is efficiently solvable.
Given $p\in \R^n$, we can check feasibility and we can check whether $\max \{p(L) ~| ~L \subseteq N \text{~independent} \} \le \alpha $ by solving a 
corresponding maximum weight independent set problem in the bipartite graph $G$. Thus we can, for any given $\alpha \in \R$, apply the ellipsoid
method to either compute some $p \in P_\alpha$ or conclude that $P_\alpha =\emptyset$. Binary search then exhibits the minimum value for which 
$P_\alpha$ is non-empty. Note that binary search works indeed in polynomial time since the optimal $\alpha$ has size polynomially bounded in $n$. 
The latter follows by observing that 
\begin{equation}\label{eq_LP}
\alpha = \min \{a ~|~ p_i+p_j \ge 1~ ~\forall ij \in E,~ p(L)-a \le 0 ~ ~\forall L \subseteq N \text{~independent}, p \ge 0\}
\end{equation}
can be computed by solving a linear system of $n$ constraints defining an optimal basic solution of the above linear program. \qed \end{proof}

The above proof also applies to all other classes of graphs, such as claw-free graphs and generalizations thereof (see~\cite{brand2016claw})
in which finding a  weighted maximum independent set is polynomial-time solvable. In general, however, computing $\alpha$ is \NP-hard (just like computing a maximum independent set).

\begin{proposition}
 Computing $\alpha_G$ for arbitrary graphs $G$ is \NP-hard. 
\end{proposition}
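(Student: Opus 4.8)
The plan is to reduce from a suitable \NP-hard decision problem and exploit the fact, used implicitly throughout Section~\ref{sec_3ornot3}, that $\alpha_G$ encodes the maximum-weight independent set behaviour of $G$ through the constraints $p\ge 0$, $p_i+p_j\ge 1$ for $ij\in E$. The cleanest route is to reduce from \textsc{Independent Set} (equivalently, \textsc{Vertex Cover}), which is \NP-hard on general graphs~\cite{GJ79}. Since the statement is about \emph{computing} $\alpha_G$, I would actually prove the stronger claim that the decision problem ``is $\alpha_G\le a$?'' is \NP-hard for $a$ given as part of the input; an $\NP$-hardness result for the associated decision problem is the standard meaning of ``computing $\alpha_G$ is \NP-hard'', and it is what the \textsc{Independent Set} reduction naturally yields.

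First I would observe that the uniform payoff $p\equiv\tfrac12$ is always feasible, and for this payoff $p(L)=\tfrac12|L|$; moreover, by an exchange/averaging argument (pushing weight towards the heavier side of each edge) one can show that among feasible payoffs the value $\max_L p(L)$ is never improved below $\tfrac12\,\omega$, where I want to relate $\omega$ to the independence number $\beta(G)$. Concretely, the key lemma to establish is: $\alpha_G \le \tfrac12\beta(G)$ always (take $p\equiv\tfrac12$), and if $G$ has no isolated vertices then $\alpha_G = \tfrac12\beta(G)$ \emph{unless} $G$ has a special structure allowing a cheaper payoff. Since this equivalence is not exact in general, the more robust plan is: given a \textsc{Vertex Cover} instance $(G,k)$, build a padded graph $G'$ — e.g.\ attach to $G$ a large disjoint ``gadget'' (a big matching or a collection of triangles) whose only feasible payoffs of interest are forced to be $\equiv\tfrac12$, and choose the padding size so that $\alpha_{G'}$ as a function of $\beta(G)$ crosses a computable threshold exactly when $\beta(G)\ge n-k$. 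Then $\alpha_{G'}\le a$ iff $G$ has a vertex cover of size $\le k$.

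The key steps, in order: (1) fix the reduction source as \textsc{Vertex Cover}; (2) prove the auxiliary inequality $\alpha_G\le\tfrac12\beta(G)$ via the payoff $p\equiv\tfrac12$, and a matching lower bound $\alpha_G\ge\tfrac12\beta(G)$ for graphs in the image of the reduction, by showing that any feasible $p$ must satisfy $p(L)\ge\tfrac12|L|$ for the maximum independent set $L$ — this uses that $L$ is independent so no edge constraint forces any $p_i$ with $i\in L$ upward, combined with a counting/LP-duality argument over a perfect or near-perfect matching on $N\setminus L$ (Gallai--Edmonds, as in the proof of Theorem~\ref{thm_graph}, gives exactly this structure); (3) design the padding gadget and verify it pins down the relevant payoff values; (4) compute the threshold $a$ and check the iff. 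The main obstacle is step~(2): the quantity $\alpha_G$ is a min-max over \emph{all} feasible payoffs, and a priori a clever non-uniform payoff could beat $\tfrac12\beta(G)$ (indeed Theorem~\ref{thm_graph} shows $\alpha_G$ can be as low as $\tfrac14 n$ even when $\beta(G)$ is large), so I must restrict attention to a graph family where this collapse is controlled — most likely graphs whose Tutte set is empty, i.e.\ graphs with a perfect or near-perfect matching, where the Gallai--Edmonds analysis forces every feasible payoff to have $p(L)\ge\tfrac12|L|$ on the maximum independent set. Ensuring the reduction outputs only such graphs (e.g.\ by adding a pendant structure that makes $G'$ factor-critical or perfectly matchable) is the delicate part; everything else is routine bookkeeping.
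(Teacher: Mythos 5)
There is a genuine gap in step (2), and it is the heart of the matter rather than ``routine bookkeeping.'' Your proposed lower bound --- that for graphs with a perfect (or near-perfect) matching every feasible payoff satisfies $p(L)\ge\tfrac12|L|$ on the maximum independent set, so that $\alpha_G=\tfrac12\beta(G)$ on the image of the reduction --- is false. Take the \emph{net} graph: a triangle $t_1t_2t_3$ with a pendant $u_i$ attached to each $t_i$. It has a perfect matching (so its Tutte set is empty and Gallai--Edmonds gives you nothing), its independence number is $3$ (the three pendants), yet the feasible payoff $p\equiv 1$ on $\{t_1,t_2,t_3\}$ and $p\equiv 0$ on $\{u_1,u_2,u_3\}$ gives $\max_L p(L)=1<\tfrac32$. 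So $\alpha_G$ simply does not determine $\beta(G)$, even on perfectly matchable graphs. The disjoint padding gadget cannot repair this: for a disjoint union, independent sets and payoffs decompose componentwise, so $\alpha_{G\sqcup H}=\alpha_G+\alpha_H$, and attaching a gadget only shifts $\alpha$ by an additive constant without constraining the payoff on the $G$-part at all. Your reduction therefore does not establish the required ``iff.''

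The missing idea --- and what the paper actually does --- is to modify $G$ so that the identity $\alpha=\tfrac12\beta(G)$ is \emph{forced}. The paper takes the strong product $G^*$ of $G$ with $K_2$: each vertex $i$ becomes two adjacent twins $i',i''$, and $i'j''$ is an edge whenever $ij\in E$. The twin edge forces $\max(p_{i'},p_{i''})\ge\tfrac12$ for every feasible $p$ and every $i$, and the independent sets of $G^*$ are exactly the sets obtained from an independent set $L$ of $G$ by distributing each $i\in L$ to one of its two copies. Hence one can always select, for each $i$ in a maximum independent set of $G$, the copy carrying payoff at least $\tfrac12$, giving $\alpha_{G^*}\ge k/2$; and $p\equiv\tfrac12$ gives $\alpha_{G^*}\le k/2$. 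This pins down $\alpha_{G^*}=k/2$ exactly and makes computing $\alpha$ at least as hard as computing the independence number. Your overall strategy (reduce from \textsc{Independent Set}, transform $G$ so that $\alpha$ reads off $\beta$) is the right one, but without a vertex-duplication device of this kind the lower-bound half of the argument fails.
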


\begin{proof}
 Given $G=(N,E)$ with maximum independent set of size $k$, let $G'=(N',E')$ and $G''=(N'',E'')$ be two disjoint copies of $G$. For each $i'\in N'$
 and $j''\in N''$ we add  an edge $i'j''$ if and only if $i=j$ or $ij \in E$  and call the resulting graph $G^*= (N^*, E^*)$. (In graph theoretic terminology
 $G^*$ is also known as the \emph{strong product} of $G$ with $P_2$.) We claim that $\alpha_{G^*} = k/2$ (thus showing that computing $\alpha_{G^*}$
 is as difficult as computing $k$).
 
 First note that the independent sets  in $G^*$ are exactly  the sets $L^* \subseteq N^*$ that arise from an independent set $L \subseteq N$ in $G$
 by splitting $L$ into two complementary sets $L_1$ and $L_2$ and defining $L^*:= L_1'\cup L_2''$.
 Hence, $p\equiv \frac{1}{2}$ on $N^*$ yields $\max p(L^*) = k/2$ where the maximum is taken over all independent sets $L^* \subseteq N^*$ in $G^*$. 
 This shows that $\alpha_{G^*} \le k/2$.
 
 Conversely, let $p^*$ be any feasible payoff in $G^*$ (that is, $p^* \ge 0$ and $p^*_i+p^*_j \ge 1 $ for all $ij \in E^*$). Let $L \subseteq N$ 
 be a maximum independent set of size $k$ in $G$ and construct $L^*$ by 
 including for each $i \in L$ either $i'$ or $i''$ in $L^*$, whichever has $p$-value at least $\frac{1}{2}$. Then, by construction, $L^*$ is an 
 independent set in $G^*$ with $p^*(L^*) \ge k/2$, showing that $\alpha_{G^*} \ge k/2$. \qed
\end{proof}

Summarizing, for graphic simple games, computing $\alpha_G$ is
as least as hard as computing the size of a maximum independent in~$G$.
For our last result we assume that $a$ is a fixed integer, that is, $a$ is not part of the input.

\begin{proposition}
 For every fixed $a > 0$, it is possible to decide if $\alpha_G \leq a$ in polynomial time for an arbitrary graph $G=(N,E)$.
\end{proposition}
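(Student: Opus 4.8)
The plan is to show that the hypothesis $\alpha_G\le a$ forces $G$ to have only polynomially many maximal independent sets, after which $\alpha_G$ itself can be computed exactly by solving a polynomial-size linear program.

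First I would establish the following structural observation. Suppose $\alpha_G\le a$, and let $p\ge 0$ be a payoff with $p_i+p_j\ge 1$ for all $ij\in E$ and $p(L)\le a$ for every independent set $L$ (such a $p$ exists since $\alpha_G\le a$). Put $H:=\{v\in N:\ p_v\ge \frac12\}$. If two vertices of $N\setminus H$ were adjacent, their payoffs would sum to less than $1$, contradicting feasibility; hence $N\setminus H$ is independent, that is, $H$ is a vertex cover of $G$. Moreover every independent set of $G[H]$ is independent in $G$, so it has $p$-value at most $a$, while each of its vertices has payoff at least $\frac12$; therefore it has at most $2a$ vertices, i.e.\ $\beta(G[H])\le 2a$, where $\beta(\cdot)$ denotes the independence number. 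So $\alpha_G\le a$ implies that $G$ has a vertex cover $H$ with $\beta(G[H])\le 2a$. I would then note that whenever $H$ is a vertex cover of $G$ with $\beta(G[H])\le t$, every maximal independent set $L$ of $G$ is determined by the set $S:=L\cap H$, because $N\setminus H$ is independent and hence, by maximality, $L\setminus H=(N\setminus H)\setminus N(S)$; since $S$ is independent in $G[H]$ we have $|S|\le t$, and consequently $G$ has at most $N_t:=\sum_{j=0}^{t}\binom{n}{j}=O(n^{t})$ maximal independent sets. Combining the two facts, $\alpha_G\le a$ implies that $G$ has at most $N_{\lfloor 2a\rfloor}$ maximal independent sets, a number polynomial in~$n$ for fixed~$a$.

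The algorithm is then immediate. First list the maximal independent sets of $G$ one at a time, using a classical polynomial-delay algorithm for this task; as soon as more than $N_{\lfloor 2a\rfloor}$ of them have been generated, stop and report $\alpha_G>a$, which is correct by the bound above. Otherwise one has obtained all maximal independent sets $L_1,\dots,L_m$ with $m\le N_{\lfloor 2a\rfloor}$ polynomial, and since $p\ge 0$ implies $\max_L p(L)=\max_i p(L_i)$ over independent sets $L$, the value $\alpha_G$ equals the optimum of
\[
\min\left\{\, a'\ :\ p\ge 0,\ \ p_i+p_j\ge 1\ \ \forall\, ij\in E,\ \ p(L_i)\le a'\ \ \forall\, i=1,\dots,m\,\right\}.
\]
This linear program has polynomially many constraints, hence is solved in polynomial time, and we output ``yes'' if and only if its optimum is at most~$a$.

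The main obstacle is the first step: realising that $\alpha_G\le a$ makes the ``heavy'' part $H$ of an optimal payoff a vertex cover whose induced subgraph has independence number at most $2a$, and deducing from this the polynomial bound on the number of maximal independent sets of $G$. Once this is in place the remaining ingredients — polynomial-delay enumeration of maximal independent sets and solving a polynomial-size linear program — are standard, and the only point that needs care is that aborting the enumeration is justified precisely by the a priori bound $N_{\lfloor 2a\rfloor}$.
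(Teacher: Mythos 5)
Your proof is correct, and it takes a genuinely different route to the key counting step. The paper first searches by brute force for $k=2\lceil a+\epsilon\rceil$ pairwise non-adjacent induced copies of $P_2$ (a payoff-independent witness: each edge forces one endpoint to payoff at least $\tfrac12$, so such a configuration yields an independent set of payoff $>a$), and in the $kP_2$-free case invokes the theorem of Balas and Yu to conclude that $G$ has only $n^{O(k)}$ maximal independent sets. You instead argue directly from an optimal payoff $p$: the set $H$ of vertices with $p_v\ge\tfrac12$ is a vertex cover whose induced subgraph has independence number at most $2a$, and since every maximal independent set $L$ is determined by $S=L\cap H$ via $L=S\cup\bigl((N\setminus H)\setminus N(S)\bigr)$, there are at most $\sum_{j\le\lfloor 2a\rfloor}\binom{n}{j}$ maximal independent sets whenever $\alpha_G\le a$. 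Your version is more self-contained (no appeal to Balas--Yu) and gives the explicit exponent $\lfloor 2a\rfloor$; it also cleverly replaces the paper's explicit $O(n^{2k})$ pre-test by an abort criterion during enumeration, which is legitimate because the Tsukiyama et al.\ algorithm has polynomial cost per output set, so the run is cut off after polynomially much work. The final step --- enumerating maximal independent sets and solving the linear program restricted to them, using that for $p\ge 0$ the maximum of $p(L)$ over independent sets is attained at a maximal one --- coincides with the paper's. The only point worth making explicit is that the minimum defining $\alpha_G$ is attained (it is a feasible, bounded linear program), so an optimal $p$ with $p(L)\le a$ for all independent $L$ indeed exists when $\alpha_G\le a$; this is implicit in the paper's LP formulation.
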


\begin{proof} 
Let $k= 2\lceil a+\epsilon \rceil$ for some $\epsilon >0$. By brute-force, we can check in $O(n^{2k})$ time if $N$ contains 
  $2k$ vertices $\{u_1,\ldots,u_k\}\cup \{v_1,\ldots,v_k\}$ that induce $k$ disjoint copies of $P_2$, that is, paths $P_i=u_iv_i$ of length $2$ for 
  $i=1,\ldots,k$ with no edges joining any two of these paths.
If so, then the condition $p(u_i)+p(v_i)\geq 1$ implies that one of $u_i,v_i$, say $u_i$, must receive a payoff $p(u_i)\geq \frac{1}{2}$,
and hence $U=\{u_1,\ldots,u_k\}$ has $p(U)\geq k/2 > a$. As $U$ is an independent set, we conclude that $\alpha(G)>a$.

Now assume that $G$ does not contain $k$ disjoint copies of $P_2$ as an induced subgraph, that is,
 $G$ is $kP_2$-free. For every $s\geq 1$, the number of maximal independent sets in a $sP_2$-free graphs is $n^{O(s)}$ 
due to a result of Balas and Yu~\cite{BY89}. Tsukiyama, Ide, Ariyoshi, and Shirakawa~\cite{TIAS77} show how to enumerate 
all maximal independent sets of a graph $G$ on $n$ vertices and $m$ edges using time $O(nm)$ per independent set. Hence we can find all maximal 
independent 
sets of $G$ and thus solve, in polynomial time, the linear program \ref{eq_LP}. Then it remains to check if the 
solution found satisfies $\alpha\leq a$. \qed
\end{proof}
  
\section{Conclusions}\label{sec_conclusion}

The two main open problems are  to prove the upper bound of $\frac{1}{4}n$ for all simple games and to tighten
the upper bound for complete simple games to $O(\sqrt{n})$. 
In order to classify simple games, many more subclasses of simple games have been identified in the literature. 
Besides the two open problems, no optimal bounds for $\alpha$
are known for other subclasses of simple games, such as \textit{strong}, 
\textit{proper}, or \textit{constant-sum} games, that is, where $v(S)+v(N\backslash S)\ge 1$, $v(S)+v(N\backslash S)\le 1$, 
or $v(S)+v(N\backslash S)= 1$ for all $S\subseteq N$, respectively.

\medskip
\noindent
{\it Acknowledgments.} The second and fourth author thank P\'eter Bir\'o and Hajo Broersma for fruitful discussions on the topic of the paper.

 \bibliographystyle{abbrv}

\appendix

\section{Finding a Decomposition into Well-Spread Graphs}\label{a-proof}

As mentioned, for the efficient implementation of the procedure for splitting a bipartite graph into well-spread subgraphs,
all we need to solve is $\max_{S \subseteq A} |S|/|N(S)|$ in bipartite graphs $G=(A\cup B, E)$, and this is equivalent to minimizing the submodular function $f(S)=r|N(S)|-|S|$. Instead of using a known algorithm for solving the latter, we present a direct algorithm.

\begin{lemma}
 Consider a bipartite graph 
 $G=(A\cup B,E)$ of order $n$ such that $A$ can be matched into $B$. Then we can find $\max_{S \subseteq A} |S|/|N(S)|$ 
 in time $O(n^6 \log n)$.
\end{lemma}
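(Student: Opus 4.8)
The plan is to reduce the computation of $r^{\ast}:=\max_{S\subseteq A}|S|/|N(S)|$ to a logarithmic number of minimum-cut (equivalently, maximum-flow) computations. This is the standard ``parametric'' device used for densest-subgraph-type problems; it is ``direct'' in the sense that it avoids invoking a general submodular-minimization routine for $f(S)=r|N(S)|-|S|$.

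First I would note that, since $A$ can be matched into $B$, Hall's condition gives $|N(S)|\ge |S|\ge 1$ for every nonempty $S\subseteq A$, so $0<r^{\ast}\le 1$ and $r^{\ast}=p/q$ for integers $1\le p\le q\le n$. Hence it suffices to decide, for a given rational $r=p/q$, whether $r^{\ast}>r$. Now $r^{\ast}>r$ holds iff some nonempty $S$ satisfies $|S|>r|N(S)|$, i.e.\ $\min_{S\subseteq A}(r|N(S)|-|S|)<0$. I would evaluate this minimum by a cut argument. Build a digraph with a source $s$, a sink $t$, an arc $s\to a$ of capacity $1$ for each $a\in A$, an arc $b\to t$ of capacity $r$ for each $b\in B$, and an arc $a\to b$ of capacity $+\infty$ for each edge $ab\in E$. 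For a finite $s$--$t$ cut let $S$ be the set of vertices of $A$ on the source side; finiteness forces $N(S)$ onto the source side, and since the $b\to t$ capacities are positive a minimum cut places no other vertex of $B$ on the source side, so its value equals $|A\setminus S|+r|N(S)|=|A|+(r|N(S)|-|S|)$. Therefore the minimum cut value is $|A|+\min_S(r|N(S)|-|S|)$, and $r^{\ast}>r$ iff this minimum cut is strictly less than $|A|$. Scaling all capacities by $q\le n$ makes them integers of size $O(n)$ (replacing $+\infty$ by any value exceeding the total source capacity), with maximum flow value at most $|A|q=O(n^{2})$.

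Next I would binary search for $r^{\ast}$ over the $O(n^{2})$ candidate values $p/q$ with $1\le p\le q\le n$ (for instance after sorting them, or directly via the Stern--Brocot tree), performing one minimum-cut computation per step, so $O(\log n)$ steps in total. Each network has $O(n)$ vertices, $O(n^{2})$ arcs, and integer capacities polynomially bounded in $n$, so a single maximum flow is computable in polynomial time --- for instance in $O(n\cdot (n^{2})^{2})=O(n^{5})$ by the Edmonds--Karp algorithm --- which gives $O(n^{5}\log n)$ overall, comfortably within the claimed $O(n^{6}\log n)$. Finally, if a concrete maximizer is wanted (as in Proposition~\ref{prop_bipartite}), one more minimum-cut computation at $r=r^{\ast}$ recovers it: the set of $A$-vertices on the source side of the largest minimum cut (obtained by residual-graph reachability) is the inclusionwise largest $S$ attaining $r^{\ast}$.

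The main thing to get right is the cut construction together with the sign bookkeeping --- in particular the claim that in a minimum cut the set of $B$-vertices on the source side is exactly $N(S)$ and nothing more, which relies on the $s\to a$ capacities being finite and the $b\to t$ capacities being strictly positive --- and the observation that $r^{\ast}$ is a rational with numerator and denominator bounded by $n$, so that $O(\log n)$ bisection steps determine it exactly rather than only approximately. The max-flow running-time estimate and the capacity scaling are then routine.
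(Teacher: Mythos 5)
Your proof is correct, and it reaches the stated bound (in fact a better one) by a genuinely different decision subroutine than the paper's. Both arguments share the same outer structure: observe that $r^{\ast}=\max_{S\subseteq A}|S|/|N(S)|$ is a fraction $p/q$ with $1\le p\le q\le n$, and binary-search over these $O(n^2)$ candidates, so that everything reduces to $O(\log n)$ threshold tests of the form ``is $r^{\ast}>p/q$?''. Where you differ is in how that test is performed. The paper replicates the graph combinatorially: it builds $\bar{G}$ with $q$ copies of $A$ and $p$ copies of $B$, wired identically to $G$, and shows that $\exists S:|S|/|N(S)|>p/q$ iff Hall's condition fails in $\bar{G}$, which is decided by one maximum-matching computation on a graph of size $O(n^2)$, i.e.\ $O(n^6)$ per test. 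You instead minimize $r|N(S)|-|S|$ directly via a parametric $s$--$t$ min-cut on a network with $O(n)$ vertices and $O(n^2)$ arcs (source arcs of capacity $1$, sink arcs of capacity $r$, infinite middle arcs), and your accounting of the cut value $|A|+(r|N(S)|-|S|)$ --- including the points that finiteness forces $N(S)$ to the source side and that positivity of the sink capacities forces no other $B$-vertex there --- is right; the scaling by $q$ keeps all capacities integral so the strict comparison with $|A|$ is exact. The two tests are essentially dual to one another (a Hall violator in the blown-up graph is exactly a deficient cut in your network), but your version avoids the quadratic blow-up and runs in $O(n^5\log n)$, improving on the paper's $O(n^6\log n)$; the paper's version buys simplicity, needing only an off-the-shelf bipartite matching routine with unit capacities. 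Your closing remark about recovering an actual maximizer from the minimum cut nearest the sink is not needed for the lemma as stated but is a correct and useful addendum for the decomposition of Proposition~\ref{prop_bipartite}.
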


\begin{proof}
 Let $0 < r_1 < r_2 < \dots < r_k \le 1$ be a complete list of all fractions in $[0,1]$ of the form $r=p/q$ with $p,q \in \{1, \dots, n\}$.
 We compute $\max_{S \subseteq A} |S|/|N(S)|$ by binary search. To check whether there exists  $S \subseteq A$ with $|S|/|N(S)|>p/q$, 
 we construct a bipartite graph $\bar{G}=(\bar{A}\cup\bar{B}, \bar{E})$, where $\bar{A}$ consists of $q$ disjoint copies of $A$, $\bar{B}$ 
 consists of $p$ disjoint
 copies of $B$, and each copy of $A$ is connected to each copy of $B$ in exactly the same way as $A$ is connected to $B$ in $G$.
 
 We claim that 
 \begin{equation}\label{eq_S}
  \exists S \subseteq A: |S|/|N(S)| >p/q
 \end{equation}
 is equivalent to 
 \begin{equation}\label{eq_barS}
  \exists \bar{S}\subseteq\bar{A}: |\bar{S}| > |\bar{N}(\bar{S})|,
 \end{equation}
where $\bar{N}(\bar{S})$ is the neighborhood of $\bar{S} \subseteq \bar{A}$ in $\bar{G}$. 

Indeed, if (\ref{eq_S}) holds, let $\bar{S} \subseteq \bar{A}$ consist of all $q$ copies of $S$, so that $|\bar{S}|=q|S|$. The neighborhood of 
$\bar{S}$ in $\bar{G}$ then consists of all $p$ copies of $N(S)$, so $|\bar{N}(\bar{S})|=p|N(S)|$, thus~(\ref{eq_barS})
holds. Conversely, if (\ref{eq_barS}) holds and $\bar{S}\subseteq \bar{A}$ satisfies $|\bar{S}| > |\bar{N}(\bar{S})|$, we may assume without loss of generality
that $\bar{S}$ consists of $q$ copies of some set $ S \subseteq A$. (Indeed, note that if $\bar{S}$ contains any copy of some $i \in A$, we may add
all other copies of $i$ to $\bar{S}$ without affecting $\bar{N}(\bar{S})$ - and hence without affecting $|\bar{S}| > |\bar{N}(\bar{S})|$.) But then 
$\bar{N}(\bar{S})$ simply consists of all $p$ copies of $N(S)$ and we get (\ref{eq_S}).

Since (\ref{eq_barS}) can be decided by solving a matching problem in $\bar{G}$, a graph of size $n^2$, this finishes the proof. (Recall that
matching problems of size $n$ can be solved in time $O(n^3)$ (see, for example~\cite{lovasz2009matching}). \qed
\end{proof}

  \end{document}